\documentclass[11pt]{article}

\usepackage{amsmath,amsthm,amssymb} %
\usepackage[affil-it]{authblk} %
\usepackage[english]{babel} %
\usepackage{caption} %
\usepackage{color} %
\usepackage{algorithmic} %
\usepackage{algorithm} %
\usepackage[utf8]{inputenc} %
\usepackage{csquotes} %
\usepackage{dsfont} %
\usepackage{enumitem} %
\usepackage{flushend} %
\usepackage[T1]{fontenc} %
\usepackage{mathpazo} 
\usepackage{framed} %
\usepackage[margin=1in]{geometry} %
\usepackage{graphicx} %
\usepackage{hyphenat} %
\usepackage[breaklinks]{hyperref} %
\usepackage{mathabx} %
\usepackage{mathtools} %
\usepackage{microtype} %
\usepackage[utf8]{inputenc} %
\usepackage{soul} %
\usepackage{subcaption} %
\usepackage{subdepth} %
\usepackage{suffix} 
\usepackage{tikz} %
\usepackage{xspace} %
\usepackage{thmtools}
\usepackage{thm-restate}

\usepackage{interval}
\intervalconfig{soft open fences}
\usepackage{cleveref}

\hypersetup{colorlinks=true, linkcolor=blue, citecolor=magenta} %
\definecolor{shadecolor}{rgb}{.95,.95,.95} %

\setul{1ex}{.5pt} %
\overfullrule=2cm %

\setlist[enumerate]{nolistsep,itemsep=3pt,topsep=3pt} %

\definecolor{White}{rgb}{1,1,1} %
\definecolor{Black}{rgb}{0,0,0} %
\definecolor{LightGray}{rgb}{.8,.8,.8} %

\usepackage{mathtools}


\newtheorem{theorem}{Theorem} %
\newtheorem{lemma}[theorem]{Lemma} %
\newtheorem{proposition}[theorem]{Proposition} %
\newtheorem{corollary}[theorem]{Corollary} %
\newtheorem{definition}[theorem]{Definition} %

\DeclarePairedDelimiter\rbra{\lparen}{\rparen}
\DeclarePairedDelimiter\sbra{\lbrack}{\rbrack}

\DeclarePairedDelimiter\Abs{\lVert}{\rVert}

\newcommand{\microspace}{\mspace{.5mu}} %

\newcommand{\norm}[1]{\left\lVert #1 \right\rVert}
\newcommand{\norms}[1]{\lVert #1 \rVert}

\newcommand{\ceil}[1]{\left\lceil #1 \right\rceil}
\newcommand{\ceils}[1]{\lceil #1 \rceil}
\newcommand{\abss}[1]{\left\lvert #1 \right\rvert}
\newcommand{\abs}[1]{\lvert #1 \rvert}
\newcommand{\paren}[1]{\left( #1 \right)}
\newcommand{\parens}[1]{( #1 )}
\newcommand{\sqb}[1]{\left[ #1 \right]}

\newcommand{\set}[1]{\left\{ #1 \right\}}
\newcommand{\sets}[1]{\{ #1 \}}
\newcommand{\ket}[1]{\ensuremath{\lvert\microspace #1
    \microspace\rangle}} %

\newcommand{\E}{\mathbf{E}}

\newcommand{\defeq}{\stackrel{\smash{\text{\tiny\rm def}}}{=}} %

\newcommand{\CC}{\mathbb{C}} %
\newcommand{\RR}{\mathbb{R}} %
\renewcommand{\Im}{\operatorname{Im}} %
\newcommand{\diag}{\operatorname{diag}} %
\newcommand{\supp}{\operatorname{supp}} %
\newcommand{\KL}{\operatorname{KL}} %
\newcommand{\TV}{\operatorname{TV}} %
\newcommand{\mix}{\operatorname{mix}} %
\newcommand{\trace}{\operatorname{Tr}} %

 %

 %
 %

 %
 %
 %
 %
 %
 %
 %
 %

 %


\renewcommand{\tilde}[1]{\widetilde{#1}} 
\renewcommand{\bar}[1]{\overline{#1}} 

 %
 %
 %
 %
 %
 %

\def\cD{\mathcal{D}} 
\def\cW{\mathcal{W}} 
\def\cO{\mathcal{O}} 
\def\cR{\mathcal{R}} 
\def\cT{\mathcal{T}} 

\newcommand{\pb}{\mathop{\mathbb{P}}\displaylimits} 

\newcommand{\QRST}{\mathsf{QRST}} %
\newcommand{\RST}{\mathsf{RST}} %
\newcommand{\SubgraphConstruct}{\mathsf{SubgraphConstruct}} %
\newcommand{\QResistance}{\mathsf{QResistance}} %
\newcommand{\Label}{\mathsf{Label}} %
\newcommand{\QStoreTree}{\mathsf{QStoreTree}} %
\newcommand{\QMaxProductTree}{\mathsf{QMaxProductTree}} %
\newcommand{\QIsoSample}{\mathsf{QIsotropicSample}} %

\begin{document}
{\sloppy


\title{\Large\bf Quantum Speedups for Sampling Random Spanning Trees}
\renewcommand*{\Affilfont}{\small\itshape} 
\author[1]{Simon Apers}
\author[2,3]{Minbo Gao}
\author[4]{Zhengfeng Ji\thanks{Correspondence author. Authors are listed in alphabetical order.}}
\author[2,3]{Chenghua Liu}
\affil[1]{
Université Paris Cité, CNRS, IRIF, Paris, France
}
\affil[2]{
  Institute of Software, Chinese Academy of Sciences, Beijing, China
}
\affil[3]{
  University of Chinese Academy of Sciences, Beijing, China
}
\affil[4]{
Department of Computer Science and Technology, Tsinghua University, Beijing, China
}

\date{\today}

\maketitle

\begin{abstract}
  We present a quantum algorithm for sampling random spanning trees from a
  weighted graph in $\widetilde{O}(\sqrt{mn})$ time, where $n$ and $m$ denote
  the number of vertices and edges, respectively.
  Our algorithm has sublinear runtime for dense graphs and achieves a quantum
  speedup over the best-known classical algorithm, which runs in
  $\widetilde{O}(m)$ time.
  The approach carefully combines, on one hand, a classical method based on
  ``large-step'' random walks for reduced mixing time and, on the other hand,
  quantum algorithmic techniques, including quantum graph sparsification and a
  sampling-without-replacement variant of Hamoudi's multiple-state preparation.
  We also establish a matching lower bound, proving the optimality of our
  algorithm up to polylogarithmic factors.
  These results highlight the potential of quantum computing in accelerating
  fundamental graph sampling problems.
\end{abstract}

\section{Introduction}
 Random spanning trees are among the oldest and most extensively studied probabilistic structures in graph theory, with their origins tracing back to Kirchhoff’s matrix-tree theorem from the 1840s~\cite{kirchhoff1847ueber}. This foundational result established a profound connection between spanning tree distributions and matrix determinants. Beyond graph theory, random spanning trees are deeply intertwined with probability theory~\cite{lyons2017probability} and exemplify strongly Rayleigh distributions~\cite{borcea2009negative}. Intriguingly, sampling random spanning trees has revealed unexpected and remarkable connections to various areas of theoretical computer science. Notably, they have been instrumental in breakthroughs that overcame long-standing approximation barriers for both the symmetric~\cite{gharan2011randomized} and asymmetric versions of the Traveling Salesman Problem~\cite{asadpour2010log}. Goyal, Rademacher and Vempala~\cite{goyal2009expanders} demonstrated their utility in constructing cut sparsifiers.
 Beyond their significance in theoretical computer science, random spanning trees also play a crucial role in machine learning and statistics. 
 They have been applied to a wide range of tasks, including graph prediction~\cite{cesa2013random}, network activation detection~\cite{sharpnack2013detecting}, spatial clustering~\cite{teixeira2019bayesian}, online recommendation systems~\cite{herbster2021gang}, and causal structure learning~\cite{duan2023low,richter2023improved}.

Given its significant theoretical importance and broad applications, the problem of sampling random spanning trees has attracted substantial research interest, leading to a series of advancements in performance and algorithmic techniques~\cite{guenoche1983random, broder1989generating, aldous1990random, kulkarni1990generating, wilson1996generating, colbourn1996two, kelner2009faster, madry2014fast, harvey2016generating, durfee2017sampling, durfee2017determinant, schild2018almost, anari2021log}. These contributions can be broadly categorized into three main approaches:

\begin{itemize}
	\item Determinant calculation approaches~\cite{guenoche1983random, kulkarni1990generating, colbourn1996two}: These methods build on Kirchhoff’s matrix-tree theorem, which states that the total number of spanning trees in a weighted graph is equal to any cofactor of its graph Laplacian. Leveraging this theorem, one can calculate determinants to randomly select an integer between 1 and the total number of spanning trees and efficiently map this integer to a unique tree. This approach was first employed by Guenoche~\cite{guenoche1983random} and Kulkarni~\cite{kulkarni1990generating} to develop an $O(mn^3)$ time algorithm,  where $m$ and $n$ denote the number of edges and vertices in the graph, respectively. Later, Colbourn, Myrvold, and Neufeld~\cite{colbourn1996two} improved it to an $O(n^\omega)$ time algorithm, where $\omega \approx 2.37$ is the matrix multiplication exponent.

	\item Approaches based on effective resistances~\cite{harvey2016generating, durfee2017sampling, durfee2017determinant}: 
	The core insights behind there approaches is that the marginal probability of an edge being in a random spanning tree is exactly equal to the product of the weight and the effective resistance of the edge. Harvey and Xu~\cite{harvey2016generating} proposed a deterministic $O(n^\omega)$ time algorithm that uses conditional effective resistances to decide whether each edge belongs to the tree, iteratively contracting included edges and deleting excluded ones. Building on this, Durfee, Kyng, Peebles, Rao and Sachdeva~\cite{durfee2017sampling} leveraged ``Schur complements'' to  efficiently compute conditional effective resistances, resulting in a faster algorithm with a runtime of  $\widetilde O(n ^{4/3}m^{1/2}+n ^2)$.  In a separate work, Durfee, Peebles, Peng and Rao~\cite{durfee2017determinant} introduced determinant-preserving sparsification, which enables the sampling of random spanning trees in $\widetilde{O}(n^2 \epsilon^{-2})$ time, from a distribution with a total variation distance of $\epsilon$ from the true uniform distribution.

	\item Random walk based approaches~\cite{broder1989generating, aldous1990random, wilson1996generating, kelner2009faster, madry2014fast, schild2018almost, anari2021log}: First introduced independently by Broder~\cite{broder1989generating} and Aldous~\cite{aldous1990random}, it was shown that  a random spanning tree can be sampled by performing a simple random walk on the graph until all nodes are visited, while retaining only the first incoming edge for each vertex. For unweighted graphs, this results in an $O(mn)$ algorithm. The subsequent works~\cite{wilson1996generating, kelner2009faster, madry2014fast} focused on accelerating these walks using various techniques involving fast Laplacian solving, electrical flows and Schur complements. This line of research culminated in Schild's work~\cite{schild2018almost}, which achieved an almost-linear time algorithm with a runtime of $m^{1+o(1)}$. The current state-of-the-art, presented by Anari, Liu, Oveis Gharan, Vinzant and Vuong~\cite{anari2021log}, introduces a simple yet elegant approach based on ``down-up random walks'', achieving a near-optimal algorithm with a runtime $O(m \log^2 m)$. Notably, the analysis of the mixing time is both technically intricate and profound.
\end{itemize}

\subsection{Main Result}
In this work, we propose a quantum algorithm for approximately sampling random
spanning trees, leading to the theorem below.
We let $\cW_G$ denote the distribution over spanning trees of $G$, where each
tree is sampled with probability proportional to the product of its edge
weights.

\begin{restatable}[Quantum algorithm for sampling a random spanning tree]{theorem}{upperbound}\label{thm:quantum-algo-UST-informal}
  There exists a quantum algorithm $\QRST(\cO_G, \varepsilon)$ that, given query
  access $\cO_G$ to the adjacency list of a connected graph $G = (V, E, w)$
  (with $\abss{V} = n$, $\abss{E} = m$, $w \in \RR^E_{\geq 0}$), and accuracy
  parameter~$\varepsilon$, with high probability, outputs a spanning tree $T$ of
  $G$ drawn from a distribution which is $\varepsilon$-close to $\cW_G$ in total
  variation distance.
  The algorithm makes $\widetilde{O}(\sqrt{mn} \log(1/\varepsilon))$ queries
  to~$\cO_G$, and runs in $\widetilde{O}(\sqrt{mn}\log(1/\varepsilon))$ time.
\end{restatable}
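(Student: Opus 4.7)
The plan is to build on the classical Aldous--Broder algorithm, which generates a random spanning tree by performing a random walk on $G$ and recording, for each vertex, the edge used on its first visit. Since its cover time is $\widetilde{O}(mn)$ in the worst case, to reach $\widetilde{O}(\sqrt{mn})$ we must both (i) drastically reduce the number of walk transitions we ever simulate explicitly, and (ii) execute each remaining transition using quantum subroutines. The overall structure will therefore be: quantumly sparsify the graph, simulate a ``large-step'' walk on the sparsifier, and deliver the required first-visit edges via a quantum sampling primitive.

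I would first invoke a quantum spectral sparsification subroutine to construct a sparsifier $H$ of $G$ with $\widetilde{O}(n)$ edges using $\widetilde{O}(\sqrt{mn})$ queries to $\cO_G$. Working with $H$, and with Schur-complemented or contracted intermediate graphs derived from it, lets Laplacian-based subroutines run in time scaling with $n$ rather than $m$, while preserving effective resistances and random-walk transition probabilities up to a $(1\pm\varepsilon)$ multiplicative factor. This is the step that converts the $m$-dependence into a one-off $\sqrt{mn}$ query budget and moves all subsequent computation into the ``$n$-sized'' regime.

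Next, I would adopt a ``large-step'' random walk framework. Instead of a step-by-step walk on $G$, we repeatedly pick an anchor set $S \subseteq V$ containing the currently unvisited vertices and simulate the walk restricted to visits of $S$ via the Schur complement onto $S$, together with the effective-resistance-weighted distribution over which edge of $G$ was used to first reach each newly visited vertex. Each large step then amounts to sampling a destination in $S$ together with a first-visit edge from a distribution specified by Laplacian quantities on $H$. To realize each such sample I would apply a sampling-without-replacement variant of Hamoudi's multiple-state preparation, producing samples in $\widetilde{O}(\sqrt{\abs{S}})$ amplitude-amplified queries, supported by auxiliary tree-product data structures in the spirit of $\QStoreTree$ and $\QMaxProductTree$ that maintain the list of visited vertices and allow the transition distribution to be updated as $S$ shrinks. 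Choosing the successive anchor-set sizes so that the total cost is dominated by the geometric mean of $m$ and $n$ yields the claimed $\widetilde{O}(\sqrt{mn})$ query and time bound.

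The main obstacle will be controlling the accumulated total-variation error. Sparsification introduces spectral distortion; Schur-complement reductions inherit it; each large step composes these perturbations; and $\cW_G$ is a highly sensitive distribution over an exponentially large set, so even mild multiplicative errors in per-transition probabilities can blow up globally. Closing the proof will therefore require a coupling or chain argument that charges only an $\varepsilon/\poly\log$ error to each large step and to each call of the sparsifier and of the state-preparation subroutine, while keeping the per-call cost within $\poly\log(1/\varepsilon)$ factors of the $\sqrt{mn}$ budget. Verifying that these error contributions compose additively, that the amplitude-amplification bias of the quantum sampling steps is absorbed into the same budget, and that the total loss stays $\leq \varepsilon$ in $\TV$ distance, is the delicate technical heart of the argument.
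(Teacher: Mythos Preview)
Your proposal and the paper take genuinely different routes. You build on Aldous--Broder (a random walk on \emph{vertices}) with Schur-complement shortcuts in the Kelner--Madry/Schild style, using a spectral sparsifier $H$ to approximate the walk. The paper instead runs a down-up walk on the space of \emph{spanning trees}: from a current tree $T$, uniformly sample $k=2n$ edges $S$ from an isotropic-transformed multigraph $G'$ (the up-step), then classically sample a random spanning tree of the $O(n)$-edge subgraph on $T\cup S$ (the down-step). By the entropy-contraction theorem of Anari--Liu--Vuong for strongly Rayleigh distributions, this chain mixes in $\widetilde{O}(\log(1/\varepsilon))$ steps. Crucially, the quantum sparsifier is used \emph{only} to produce constant-factor leverage-score overestimates that define $G'$; the chain itself runs exactly on $G'$, whose spanning-tree law equals $\cW_G$ after forgetting multi-edge labels, so no spectral approximation error ever touches the target distribution.

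There is a genuine gap in your plan, and it is not only the error control you flag. The $\sqrt{mn}$ budget in the paper comes from \emph{batching}: the $2n$ edges in the up-step are i.i.d.\ uniform over $E'\setminus T$, so Hamoudi's multi-preparation (in its sampling-without-replacement variant) returns all of them in $\widetilde{O}(\sqrt{mn})$ queries. In a shortcutted Aldous--Broder walk, by contrast, the distribution of each first-visit edge depends on the entire history of prior visits, so the $n-1$ first-visit samples are sequentially adaptive and cannot be batched; you are forced into $\Omega(n)$ adaptive rounds, and even an idealized $\widetilde{O}(\sqrt{m})$ cost per round lands you at $\widetilde{O}(n\sqrt{m})$ rather than $\widetilde{O}(\sqrt{mn})$. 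The error issue you anticipate is also real---a spectral sparsifier controls quadratic forms $x^\top L x$, not the determinants of Laplacian minors that govern $\cW_G$, so a $(1\pm\varepsilon)$ spectral distortion becomes a $(1\pm\varepsilon)^{\Theta(n)}$ multiplicative distortion on tree weights, and driving $\varepsilon$ down to $O(1/n)$ already overruns the query budget---but even granting a perfect oracle for all Schur-complement quantities, the sequential-dependence obstruction already blocks the target runtime.
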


We also prove a matching lower bound, showing that the runtime of our quantum algorithm is optimal up to polylog-factors.

\begin{restatable}[Quantum lower bound for sampling a random spanning tree]{theorem}{lowerbound}
\label{thm:lower-bound}
Let $\varepsilon < 1/2$ be a constant.
For any graph $G=\parens{V,E,w}$ with $\abss{V}=n$, $\abss{E}=m$, $w \in \RR^E_{\geq 0}$, consider the problem of sampling a random spanning tree from a distribution $\varepsilon$-close to $\cW_G$, given adjacency-list access to $G$. The quantum query complexity of this problem is $\Omega\parens{\sqrt{mn}}$.
\end{restatable}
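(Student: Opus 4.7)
The plan is to prove this lower bound by reducing a parallel-search problem with quantum query complexity $\Omega(\sqrt{mn})$ to spanning tree sampling. As the hard instance, I would take a base $d$-regular graph $G$ on $n$ vertices with $d = \Theta(m/n)$, fix an arbitrary spanning tree $T \subseteq E(G)$, and define the weighted graph $G_T$ in which the $n-1$ tree edges carry weight $W = \poly(n)$ and all other edges have weight $1$. A standard counting argument bounding the number of spanning trees that differ from $T$ in $k$ swapped edges by $(nm)^k$ shows that, for $W$ large enough (but still polynomial in $n$), $\cW_{G_T}$ places mass $1 - 1/\poly(n)$ on the tree $T$ itself. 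Therefore any $\varepsilon$-close sampler (with $\varepsilon < 1/2$) must output exactly $T$ with constant probability, so it must essentially reconstruct $T$ from oracle queries.

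To turn reconstruction into a parallel search, I would randomize the oracle: for each vertex $v$, the $d$ neighbors of $v$ are presented in an independent uniformly random order. Then the unique slot $\pi(v) \in [d]$ that stores $v$'s tree-edge neighbor (equivalently, its unique weight-$W$ neighbor) is uniformly random on $[d]$, and learning $T$ is equivalent to learning the collection $\{\pi(v)\}_{v \in V}$. This is a parallel-search problem consisting of $n$ independent Grover-style searches of size $d$ each, whose quantum query complexity is $\Omega(n \sqrt{d}) = \Omega(\sqrt{mn})$; this lower bound follows either from the direct-sum form of Ambainis's adversary method applied to $n$ independent instances, or from the general $\Omega(\sqrt{Nk})$ lower bound for finding $k$ marked items among $N$, specialized to $N = nd = \Theta(m)$ and $k = n-1$. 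Chaining these two reductions gives the claimed $\Omega(\sqrt{mn})$ bound.

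The main technical obstacle will be to argue rigorously that the sampler cannot circumvent the parallel-search bottleneck — e.g., cannot amortize its cost across vertices by exploiting global structural information about $G$. The key fact I would rely on is that because the adjacency-list permutations are independent per vertex, the posterior distribution of $\pi(v)$ conditioned on any transcript of queries to vertices $u \neq v$ remains uniform on $[d]$. Provided this per-vertex independence is preserved under quantum superposition queries (as it is in the standard quantum oracle model with uniformly random classical permutations), a direct-sum adversary or hybrid argument over the $n$ vertices transfers the parallel-search lower bound cleanly to spanning tree sampling, yielding the desired $\Omega(\sqrt{mn})$ bound.
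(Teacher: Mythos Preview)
Your high-level strategy---reduce from parallel search, design a weighted instance so that $\cW_G$ is essentially a point mass on a single tree $T$, and argue that outputting $T$ requires $\Omega(\sqrt{mn})$ queries---matches the paper's. But your concrete reduction has a gap that the paper's construction is specifically designed to avoid.

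The factual error is in the phrase ``the unique slot $\pi(v)\in[d]$ that stores $v$'s tree-edge neighbor''. A spanning tree on $n$ vertices has $n-1$ edges, hence average tree-degree close to $2$; no spanning tree has the property that every vertex has exactly one tree neighbor. So $\pi(v)$ is not well-defined, and the reduction to $n$ independent $\mathrm{OR}_d$ instances collapses. Your backup route via the $\Omega(\sqrt{Nk})$ bound for finding $k$ marked items is also not immediate: each tree edge $\{u,v\}$ appears as a heavy entry in \emph{two} adjacency lists, and discovering it in $u$'s list already identifies it, so nothing forces the sampler to locate it in $v$'s list as well. Your final independence claim (the posterior of $\pi(v)$ stays uniform given queries at $u\neq v$) concerns \emph{positions}, which is true, but the sampler only has to output the tree \emph{edges}, and those can be learned entirely through other vertices' lists. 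So the per-vertex direct-sum argument does not go through as stated.

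The paper sidesteps all of this with a tailored graph rather than a generic $d$-regular one. It takes vertices $\{s,\ell_1,\dots,\ell_k,r_1,\dots,r_n\}$ with $k=\Theta(m/n)$, gives every edge $(s,\ell_i)$ weight $1$, and gives $(\ell_i,r_j)$ weight $M_{ij}\in\{0,1\}$ where each row of $M$ has a single $1$. Now each $r_j$ genuinely has a unique nonzero-weight neighbor, the $n$ rows of $M$ are independent by construction, and a single adjacency-list query at $r_j$ is literally a single query to $M$. The direct product theorem for $\mathrm{OR}_k$ then gives $\Omega(n\sqrt{k})=\Omega(\sqrt{mn})$ with no further work. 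If you want to rescue your approach, you would need a graph/tree pair in which one side of the tree consists of degree-$1$ vertices whose tree-neighbors are chosen independently---which is essentially the paper's bipartite picture.
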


\subsection{Techniques}
Our algorithm is based on the ``down-up random walk'' approach
from~\cite{anari2021log}.
Their core idea (already present in earlier work~\cite{russo2018linking}) is to
consider a random walk over spanning trees.
In each step of the random walk, they randomly remove an edge to split it into
two components, and then add a new edge sampled from the edges across the two
components proportional to the edge weights.
Their crucial contribution is to show that this random walk has a 
mixing time $\widetilde O(n)$, which is sublinear with respect to the number of edges $m$.
Since each step of the down-up random walk can be costly ($\widetilde O(m)$,
naively), their final algorithm actually considers an ``up-down random walk'',
where first an edge is added and then another edge is removed.
While this walk has a larger mixing time $\widetilde O(m)$, each step can now be
implemented in amortized time $\widetilde O(1)$ using link-cut trees.

\subsubsection{Idea (and Barrier) for Quantum Speedups}

To achieve a sublinear-time quantum algorithm, we could
follow~\cite{anari2021log} and attempt to quantumly accelerate the mixing time
of the up-down walk.
However, apart from some special
cases~\cite{aharonov2001quantum,moore2002quantum,richter2007quantum}, no general
quantum speedups for the mixing time of classical random walks are known.
As a result, any algorithm with a mixing time of $\widetilde{O}(m)$ does not
yield a quantum advantage.

A natural alternative is to revisit the down-up random walk, which has a
sublinear mixing time of $\widetilde{O}(n)$, and leverage quantum techniques to
speed up the implementation of each step.
Specifically, after removing an edge, we can sample an edge between the two
resulting components in $\widetilde{O}(\sqrt{m})$ time using Grover
search~\cite{grover1996fast}, compared to the classical $\widetilde{O}(m)$ time.
Unfortunately, this still results in an overall complexity of
$\widetilde{O}(\sqrt{m} n) \in \widetilde{\Omega}(m)$, offering no speedup over
classical algorithms.

\subsubsection{Our Approach}
To overcome the above barriers, we use an amortization idea of implementing a
``large-step'' down-up walk instead of a single-step approach.
In a standard down-up walk, a single edge in the spanning tree is changed after
each step.
In contrast, the large-step down-up walk modifies several edges---approximately
$\Theta(n)$---in each step, resulting in a significantly faster mixing time of
$\widetilde{O}(1)$.
These $\Theta(n)$ edges are sampled from a batch of $\widetilde O(n)$ edges,
which we select among the~$m$ edges in time $\widetilde O(\sqrt{m n})$ using
quantum search.
While the framework appears simple at first glance, implementing each step
requires a careful integration of various quantum algorithms.
Additionally, analyzing the mixing time relies on deep results
from~\cite{ALV22}.
Below, we outline the main techniques utilized in our approach:

\paragraph{Domain Sparsification Under Isotropy.}
Domain sparsification~\cite{anari2020isotropy,anari2021domain,ALV22} is a
framework which aims to reduce the task of sampling from a distribution on
$\binom{\sqb{m}}{k}$ with $k \ll m$ to sampling from some related distribution
on $\binom{\sqb{t}}{k}$ for $t \ll m$.
For so-called ``strongly Rayleigh distributions''~\cite{borcea2009negative}, the
domain size can be reduced to be nearly linear in the input size with
$t=\widetilde O \parens{k}$~\cite{ALV22}.
For random spanning trees, which are a canonical example of a strongly Rayleigh
distribution, this reduces the sampling problem from domain $\binom{\sqb{m}}{n}$
to $\binom{\sqb{t}}{n}$ with $t \in \widetilde O(n)$.
This provides an opportunity for quantum acceleration, provided that it is
easier to sample this subdomain of $\widetilde O(n)$ edges.

A key technique enabling efficient sampling of such a subdomain is the
``isotropic transformation'', which can be interpreted as a discrete analogue of
the isotropic position of a continuous
distribution~\cite{rudelson1999random,lee2024eldan}.
Originally developed in the context of designing algorithms for sampling from
logconcave distributions, the isotropic position uses a linear map to convert a
distribution into a standard form, enabling faster and more efficient sampling.
In~\cite{ALV22}, Anari Liu and Vuong describe a discrete isotropic
transformation for distributions over $\binom{\sqb{m}}{k}$ based on the
marginals of individual elements of $\sqb{m}$.
After this isotropic transformation, domain sparsification reduces to sampling a
subdomain uniformly at random from the domain.

For the particular case of sampling random spanning trees, the marginal
probability of an edge is given by its effective resistance, and domain
sparsification amounts to sampling edges uniformly at random in the
isotropic-transformed multigraph (see~\cref{le:isotropic-transform-on-graph}),
whose construction is based on the effective resistances of the edges in the
original graph.

In our algorithm, rather than explicitly computing this isotropic transformation
(whose description size is $\widetilde O(m)$ which we cannot tolerate), we
utilize a quantum data structure, $\QResistance$, which provides quantum
query access to effective resistances, to ``implicitly'' construct and maintain
the isotropic-transformed multigraph.
More specifically, with $\QResistance$ in place, we can efficiently
implement the up step of the walk, which involves sampling a subdomain of $O(n)$
edges uniformly from the implicit isotropic-transformed multigraph, as well as
the down step, which requires sampling a random spanning tree from the resulting
$\widetilde O(n)$-edge subdomain.

\paragraph{Quantum Graph Algorithms and Quantum Multi-Sampling.} 
To achieve the near-optimal quantum speedup, our algorithm incorporates a suite
of quantum algorithms as subroutines.
During the initialization stage, it utilizes the quantum minimum spanning tree
algorithm proposed in~\cite{DHHM06} to identify a starting spanning tree,
reducing the complexity from $\widetilde{O}(m)$ to $\widetilde{O}(\sqrt{mn})$.
For approximating the effective resistance, we adopt the approach
in~\cite{AdW22} that combines the quantum graph sparsification algorithm with
the Spielman-Srivastava toolbox~\cite{SS11} to construct an efficient
$\QResistance$ data structure in $\widetilde{O}(\sqrt{mn})$ time.
This data structure provides query access to approximate effective resistances
with a cost of $\widetilde{O}(1)$ per query.
For the up operation, we develop a variant of quantum search,
$\QIsoSample$, for sampling multiple edges in the isotropic-transformed
graph with up to a quadratic speedup.
The key idea behind this algorithm is to leverage a variant of the ``preparing
many copies of quantum states'' technique proposed in~\cite{Ham22}, utilizing
the query access provided by our $\QResistance$.
This variant of the Hamoudi's technique can be thought of as a sampling without
replacement method, ensuring that the samples are all different.
Notably, $\QIsoSample$ efficiently samples a set of $\widetilde O(n)$
edges in $\widetilde{O}(\sqrt{mn})$ time.
Leveraging these algorithms, our approach ultimately achieves random spanning
tree sampling in $\widetilde{O}(\sqrt{mn})$ time.

\subsubsection{Lower Bound}
The lower bound follows from a reduction of the problem of finding $n$ marked elements among $m$ elements, whose quantum query complexity is $\Theta(\sqrt{mn})$, to the problem of sampling a uniform spanning tree in a weighted graph.
The reduction encodes the search problem into the weights of a fixed graph, in such a way that a uniformly random spanning tree in that graph reveals the marked elements.
The argument is similar to the $\Omega(\sqrt{mn})$ lower bound from~\cite{DHHM06} on the quantum query complexity of finding a minimum spanning tree.

\subsection{Open Questions}
Our work raises several interesting questions and future directions, including the following:
\begin{itemize}
	\item The runtime of our quantum algorithm is tight, up to polylogarithmic factors, for sampling random spanning trees in weighted graphs. Can we potentially improve the runtime for unweighted graphs, for instance, to $\widetilde{O}(n)$?
    This is the case for constructing spanning trees, as was shown in the earlier work \cite{DHHM06}.
    Note that our $\Omega(\sqrt{mn})$ lower bound applies only to the weighted graph case. 
    We were unable to prove a $\omega(n)$ lower bound for the unweighted case, based on which we conjecture that there exist more efficient algorithms for the unweighted case.
	
	\item This work represents the first application of the down-up walk in quantum algorithm design. As demonstrated by a series of breakthroughs in sampling and counting problems for spin systems~\cite{alev2020improved,anari2021spectral,chen2021rapid,jain2021spectral,alimohammadi2021fractionally,liu2021coupling,blanca2022mixing,abdolazimi2022matrix}, the down-up walk has proven highly effective in designing classical algorithms for a variety of graph problems. Notable examples include independent set sampling~\cite{alev2020improved,anari2021spectral}, $q$-colorings sampling~\cite{chen2021rapid,jain2021spectral,blanca2022mixing}, edge-list-colorings sampling~\cite{abdolazimi2022matrix}, planar matching counting and sampling~\cite{alimohammadi2021fractionally}, and vertex-list-colorings sampling~\cite{liu2021coupling}. Given these successes, it would be intriguing to explore whether quantum speedups can be achieved for these problems. 
	
\item Additionally, exploring quantum speedups for determinantal point processes (DPPs) is a natural and meaningful direction. Like random spanning trees, DPPs are strongly Rayleigh distributions, enabling efficient domain sparsification~\cite{anari2020isotropy,anari2021domain,ALV22}. They also play a crucial role in applications such as machine learning~\cite{kulesza2011k,kulesza2012determinantal,derezinski2020improved}, optimization~\cite{derezinski2020debiasing,derezinski2020improved,nikolov2022proportional}, and randomized linear algebra~\cite{derezinski2017unbiased,derezinski2019minimax,derezinski2024solving}. Developing quantum algorithms for DPPs could lead to quantum computational advantages in these areas.
\end{itemize}

\section{Preliminaries}
For simplicity, 
we use $\sqb{n}$ to 
represent the set $\sets{1, 2, \ldots, n}$ and
$\sqb{n}_{0}$ to 
represent the set $\sets{0, 1, \ldots, n-1}$. 
We consider an
undirected weighted graph $G = (V, E, w)$, where $V$ is the vertex set, $E$ is
the edge set, and $w : E \mapsto \RR_{\geq0}$ represents the weight function. We use
$n,m$ to denote the size of $V$ and $E$, respectively. We use
$\delta_i$ to denote the vector whose elements are 0 except for the $i$-th
element, which is 1.

For a distribution defined over all subsets $\mu: 2 ^{\sqb{m}}\to \RR_{\geq 0}$ and $S \subseteq \sqb{m}$, let $\mu_S$ be the restricted distribution of $P \sim \mu$ conditioned on the event $P \subseteq S$.

\subsection{Quantum Computational Model}

We assume the same quantum computational model as in for
instance~\cite{DHHM06,AdW22}.
This entails a classical control system that (i) can run quantum circuits on
$O(\log n)$ qubits, (ii) can make quantum queries to the adjacency list oracle
$\cO_G$, and (iii) has access to a quantum-read/classical-write RAM of
$\widetilde O(\sqrt{mn})$ bits.
The \emph{quantum query complexity} measures the number of quantum queries that
an algorithm makes.
The \emph{quantum time complexity} measures the number of elementary operations
(classical and quantum gates), quantum queries, and single-bit QRAM operations
(classical-write or quantum-read).
If we only care about the quantum query complexity, then we can drop the QRAM
assumption at the expense of a polynomial increase in the time complexity.

\subsection{Markov Chains and Mixing time}
Let $\mu,\nu$ be two discrete probability distributions over a finite set $\Omega$.
The total variation distance (or $\TV$-distance) between $\mu$ and $\nu$ is
 defined as
  \begin{equation*}
    \norm{\mu-\nu}_{\TV}:=\frac{1}{2}\sum_{x \in \Omega}\abss{\mu\parens{x}-\nu\parens{x}}.
  \end{equation*}
The Kullback-Leibler (KL) divergence, also known as relative entropy, between $\mu$ and $\nu$
  is defined as
  \begin{equation*}
    \cD_{\KL}\parens{\mu \parallel \nu}:=\sum_{x\in \Omega}\mu\parens{x}\log\paren{\frac{\mu\parens{x}}{\nu\parens{x}}}.
 \end{equation*}
A well-known result, often summarized as ``relative entropy never increases'', is stated in the following theorem:
\begin{theorem}[Data Processing Inequality]\label{thm:data-processing}
	For any transition probability matrix $P\in \RR^{\Omega\times \Omega^\prime}$ and pair of distributions $\mu,\nu:\Omega\to \RR_{\geq 0}$, it holds that 
	\begin{equation*}
		\cD_{\KL}\parens{\nu P \parallel \mu P}\leq \cD_{\KL}\parens{\nu \parallel \mu}.
	\end{equation*}
\end{theorem}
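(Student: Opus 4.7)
The plan is to derive the inequality from the log-sum inequality, which is the standard analytic tool for this classical result. Recall that for nonnegative reals $a_1, \dots, a_k$ and $b_1, \dots, b_k$,
\[
  \paren{\sum_i a_i} \log \frac{\sum_i a_i}{\sum_i b_i} \leq \sum_i a_i \log \frac{a_i}{b_i},
\]
with the usual conventions $0 \log 0 = 0$ and $a \log(a/0) = +\infty$ for $a > 0$; this follows from Jensen's inequality applied to the convex function $t \mapsto t \log t$.

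First, I would expand the left-hand side by writing $(\nu P)(y) = \sum_x \nu(x) P(x,y)$ and $(\mu P)(y) = \sum_x \mu(x) P(x,y)$. Then, for each fixed $y \in \Omega'$, I would apply the log-sum inequality with $a_x = \nu(x) P(x,y)$ and $b_x = \mu(x) P(x,y)$, which yields
\[
  (\nu P)(y) \log \frac{(\nu P)(y)}{(\mu P)(y)} \leq \sum_x \nu(x) P(x,y) \log \frac{\nu(x)}{\mu(x)},
\]
where the $P(x,y)$ factors inside the logarithm cancel. Next I would sum over $y \in \Omega'$ and swap the order of summation; the crucial observation is that $\sum_y P(x,y) = 1$ because $P$ is a transition matrix, so the right-hand side collapses to $\sum_x \nu(x) \log \parens{\nu(x)/\mu(x)} = \cD_{\KL}\paren{\nu \parallel \mu}$, giving the desired bound.

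I do not anticipate any serious obstacle, since the proof is textbook. The only minor care needed is in handling degenerate terms: when $\cD_{\KL}\paren{\nu \parallel \mu} = +\infty$ the inequality is trivial, and when it is finite one has $\supp\parens{\nu} \subseteq \supp\parens{\mu}$, which also forces $\supp\parens{\nu P} \subseteq \supp\parens{\mu P}$ and keeps every term in the derivation well-defined. A clean alternative that sidesteps this bookkeeping is via the chain rule for KL divergence applied to the joint distributions $\bar\mu(x,y) = \mu(x) P(x,y)$ and $\bar\nu(x,y) = \nu(x) P(x,y)$ on $\Omega \times \Omega'$: factoring in the ``$x$-first'' order gives $\cD_{\KL}\paren{\bar\nu \parallel \bar\mu} = \cD_{\KL}\paren{\nu \parallel \mu}$ since the conditional distributions $P(x,\cdot)$ agree, while the ``$y$-first'' factorization combined with non-negativity of conditional KL gives $\cD_{\KL}\paren{\bar\nu \parallel \bar\mu} \geq \cD_{\KL}\paren{\nu P \parallel \mu P}$; combining the two yields the claim.
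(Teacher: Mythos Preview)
Your argument is correct and is the standard textbook proof via the log-sum inequality (with a clean chain-rule alternative). Note, however, that the paper does not actually prove this statement: it is quoted in the preliminaries as a well-known result (``relative entropy never increases'') and used as a black box, so there is no proof in the paper to compare against.
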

	
A Markov chain on a finite state space $\Omega$ is specified by its transition probability matrix $P \in \RR^{\Omega \times \Omega}$. For a comprehensive introduction to the analysis of Markov chains, we refer the reader to~\cite{levin2017markov}. A distribution $\mu$ is said to be stationary with respect to $P$ if $\mu P = \mu$. In this case, $\mu$ is also called the stationary distribution of the Markov chain.
When $P$ is ergodic, the Markov chain has a unique stationary distribution $\mu$, and for any initial probability distribution $\nu$ on $\Omega$, $\norms{\nu P^t- \mu}_{\TV}$ converges to $0$ as $t \to \infty$. To precisely characterize how quickly an ergodic Markov chain converges, one can define the mixing time as below.

\begin{definition}[Mixing time]
  Let $P$ be an ergodic Markov chain on a finite state space $\Omega$ and let
  $\mu$ denote its stationary distribution.
  For any probability distribution $\nu$ on $\Omega$ and
  $\varepsilon\in \interval[open]{0}{1}$, we define
  \begin{equation*}
    t_{\mix}\parens{P,\nu,\varepsilon}:=\min\set{t \geq 0 \mid \norms{\nu P^t -\mu}_{\TV}\leq \varepsilon}.
  \end{equation*}
\end{definition}
The following well-known theorem demonstrates the relationship between the modified log-Sobolev constant and mixing times, which bounds the mixing time via modified log-Sobolev inequalities.
\begin{theorem}[\cite{bobkov2006modified}, see also corollary 8 in~\cite{CGM21}]\label{thm:MLSI-mixing}
  Let $P$ denote the transition matrix of an ergodic, reversible Markov chain on
  a finite state space $\Omega$ with the stationary distribution $\mu$.
  Suppose there exists some $\alpha\in \interval[open left]{0}{1}$ such that for
  any probability distribution $\nu$ on $\Omega$, we have
  \begin{equation*}
    \cD_{\KL}\parens{\nu P \parallel \mu P}\leq \parens{1-\alpha}\cD_{\KL}\parens{\nu\parallel \mu}.
  \end{equation*}
 Then, for any $\varepsilon\in \interval[open]{0}{1}$,
 \begin{equation*}
   t_{\mix}\parens{P,1_x,\varepsilon}\leq \ceil{\frac{1}{\alpha}\cdot\paren{\log\log\paren{\frac{1}{\mu \paren{x}}}+\log\paren{\frac{1}{2\varepsilon^2} }}},
 \end{equation*}
   where $1_x$ is the point mass distribution supported on $x$.
\end{theorem}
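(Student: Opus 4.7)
The plan is to iterate the hypothesised KL-contraction along the chain, convert the resulting bound on relative entropy to a bound on total variation distance via Pinsker's inequality, and then solve for the smallest $t$ making the TV bound at most $\varepsilon$. Concretely, starting from any distribution $\nu$, I would apply the hypothesis with $\nu$ replaced successively by $\nu P, \nu P^2, \ldots$, using the fact that $\mu P = \mu$ (so $\cD_{\KL}(\nu P \parallel \mu P) = \cD_{\KL}(\nu P \parallel \mu)$), to obtain by induction
\begin{equation*}
  \cD_{\KL}\parens{\nu P^t \parallel \mu} \leq \parens{1-\alpha}^t \cD_{\KL}\parens{\nu \parallel \mu}.
\end{equation*}

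Next I would specialise to $\nu = 1_x$. A direct computation gives $\cD_{\KL}(1_x \parallel \mu) = \log(1/\mu(x))$, since the point mass assigns probability $1$ to $x$ and $0$ elsewhere. Combined with Pinsker's inequality $\norms{\nu P^t - \mu}_{\TV}^2 \leq \tfrac{1}{2}\cD_{\KL}(\nu P^t \parallel \mu)$, this yields
\begin{equation*}
  \norms{1_x P^t - \mu}_{\TV}^2 \leq \tfrac{1}{2}\parens{1-\alpha}^t \log\parens{1/\mu(x)}.
\end{equation*}

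To force the right-hand side below $\varepsilon^2$, I would take logs and use the elementary inequality $-\log(1-\alpha)\geq \alpha$ valid for $\alpha \in (0,1]$. This gives the sufficient condition
\begin{equation*}
  t \alpha \;\geq\; \log\log\parens{1/\mu(x)} + \log\parens{1/(2\varepsilon^2)},
\end{equation*}
whence the smallest admissible integer $t$ is at most $\lceil \alpha^{-1}(\log\log(1/\mu(x)) + \log(1/(2\varepsilon^2)))\rceil$, matching the claimed bound.

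The argument is essentially routine once the contraction is iterated, so there is no real obstacle; the only subtlety is verifying that the contraction hypothesis is correctly iterable, which rests on the identity $\mu P = \mu$ at every stage and is the reason reversibility/ergodicity enters only through ensuring a unique stationary $\mu$. Edge cases ($\alpha = 1$ giving $(1-\alpha)^t = 0$ for $t \geq 1$, or $\mu(x) = 1$ making the initial KL zero) are handled trivially by the same formula.
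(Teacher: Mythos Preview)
The paper does not supply its own proof of this theorem; it is stated as a known result with citations to \cite{bobkov2006modified} and \cite{CGM21} and is used later without further justification. Your argument---iterating the KL contraction using $\mu P=\mu$, computing $\cD_{\KL}(1_x\parallel\mu)=\log(1/\mu(x))$, applying Pinsker, and inverting via $-\log(1-\alpha)\geq\alpha$---is correct and is exactly the standard proof one finds in the cited references.
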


\subsection{Laplacian and Graph Sparsification}
For an undirected weighted graph  $G = (V, E, w)$, the weighted degree of a vertex  $v$  is defined as
$\deg(v) = \sum_{e \in E: v \in e} w_e$,
where the sum is taken over all edges  $e$  incident to  $v$, and $ w_e$  denotes the weight of edge  $e$.

\begin{definition}[Laplacian]
	The Laplacian of a weighted graph $G=\parens{V,E,w}$ is defined as the matrix $L_{G}\in \RR^{V\times V}$ such that
	\begin{equation*}
		\parens{L_ G}_{ij}=
		\begin{cases}
			\deg\parens {i} & \textup{if }i=j,             \\
			-w _ {ij}       & \textup{if }\sets{i,j}\in E, \\
			0               & \textup{otherwise}.
		\end{cases}
	\end{equation*}
\end{definition}

Equivalently, the Laplacian of graph $G$ is given by $L _G= D _G-A _G$, with $A_G$ the
weighted adjacency matrix $\paren{A_G}_{ij}=w _{ij}$ and $D$ the diagonal
weighted degree matrix $D_G=\diag\paren{\deg\paren{i}: i \in V}$. $L_G$ is a
positive semidefinite matrix since the weight function $w$ is nonnegative, and $L_G$ induces the quadratic form
\begin{equation*}\label{eq:quadratic-laplacian}
x ^\top L_ G  x=\sum_{\set {i,j}\in  E}w _{ij}\cdot \parens{x_i -x_j}^2
\end{equation*}
for  arbitrary vector $x \in \RR^V$.
Graph sparsification produces a reweighted graph with fewer edges, known as a graph spectral sparsifier.
A graph spectral sparsifier of $G$ is a re-weighted subgraph that closely approximates the quadratic form of the Laplacian for any vector $x\in \RR^V$
(see \cref{def:graph-spectral-sparsifier} for a formal definition).

In the groundbreaking work~\cite{SS11}, the authors
demonstrated that graphs can be efficiently sparsified by sampling $\widetilde O(n)$ edges with
weights roughly proportional to their effective resistance. Now, we introduce the concept of
effective resistance.

\begin{definition}[Effective Resistance]
	Given a graph $G =\parens {V,E,w}$, the \emph{effective resistance} $R_{ij}$ of a pair of $i , j \in V $ is defined as
	\begin{equation*}
		R_ {ij }=\parens {\delta_i -\delta_j }^\top L_G ^{+}\parens {\delta_i -\delta_j }= \norms {(L_G^+)^{1/2} \parens {\delta_i -\delta_j }}^2.
	\end{equation*}
	Here, $\delta_i$ is a vector with all elements equal to 0 except for the $i$-th is 1, and $L^{+}_G$ is the Moore-Penrose inverse of the Laplacian matrix $L_G$. 
\end{definition}
\begin{definition}[Leverage Score and Leverage Score Overestimates]
	Given a graph $G =\parens {V,E,w}$, the \emph{leverage score} $\ell_e$ of an edge  $e \in E$  is defined as the product of its weight and its effective resistance:
\begin{equation*}
\ell_e = w_e R_{ij}, \quad \text{for } e = \{i, j\}.
\end{equation*}
    An $\lambda$-leverage score overestimates
    $\widetilde\ell$ for $G$
    is a vector in $\RR^E_{\geq 0}$
    satisfying $\norms{\widetilde \ell}_1\leq \lambda$,
    and $\widetilde\ell_e\geq \ell_e= w_e R_e$ for all $e\in E$.
\end{definition}

\begin{lemma}[{Foster's theorem~\cite{foster1949average}}]\label{le:Foster}
  For a weighted graph $G=\parens{V,E,w}$, let $\ell_{e}$ denote the leverage score of an edge $e\in E$.
  Then, it holds that $\sum_{e \in E} \ell_e \leq n - 1$.
\end{lemma}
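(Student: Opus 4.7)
The plan is to start from the definition of effective resistance and rewrite each leverage score as a trace, then sum over edges to recover the trace of the projector onto the range of the Laplacian. Concretely, for $e = \set{i,j}$, I would write
\begin{equation*}
  \ell_e = w_e R_{ij} = w_e (\delta_i - \delta_j)^\top L_G^{+} (\delta_i - \delta_j) = \trace\paren{L_G^{+} \cdot w_e (\delta_i - \delta_j)(\delta_i - \delta_j)^\top},
\end{equation*}
using the cyclic property of the trace. Summing over all $e \in E$ and pulling the sum inside the trace by linearity gives
\begin{equation*}
  \sum_{e \in E} \ell_e = \trace\paren{L_G^{+} \sum_{\set{i,j} \in E} w_{ij} (\delta_i - \delta_j)(\delta_i - \delta_j)^\top}.
\end{equation*}

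Next I would observe that the inner sum is exactly the standard edge-by-edge decomposition of the Laplacian: expanding $(\delta_i - \delta_j)(\delta_i - \delta_j)^\top$ contributes $w_{ij}$ to the diagonal entries $ii$ and $jj$ and $-w_{ij}$ to the off-diagonal entries $ij$ and $ji$, which matches the definition of $L_G$ given earlier. Hence
\begin{equation*}
  \sum_{e \in E} \ell_e = \trace\paren{L_G^{+} L_G}.
\end{equation*}

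To finish, I would use a basic fact about the Moore--Penrose pseudoinverse: $L_G^{+} L_G$ is the orthogonal projector onto the column space of $L_G$, so its trace equals $\operatorname{rank}(L_G)$. For a weighted graph on $n$ vertices with $c \geq 1$ connected components, the kernel of $L_G$ is spanned by the indicator vectors of the components, giving $\operatorname{rank}(L_G) = n - c \leq n - 1$. This yields $\sum_{e \in E} \ell_e \leq n - 1$ as claimed (with equality iff $G$ is connected).

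I do not expect any serious obstacle here; the only mild subtlety is the bookkeeping with the pseudoinverse in the possibly disconnected case, which is handled cleanly by noting that $L_G^{+} L_G$ is a projector whose rank equals that of $L_G$.
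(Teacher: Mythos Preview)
Your proposal is correct and follows essentially the same approach as the paper: rewrite each leverage score as a trace, sum to obtain $\trace(L_G^{+}L_G)$, and bound this by $\operatorname{rank}(L_G)\le n-1$. The paper's proof is slightly terser (it writes $\trace(L_G L_G^{+})$ and simply cites that $\operatorname{rank}(L_G)\le n-1$), but the argument is the same.
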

\begin{proof}For a edge $\sets{i,j}$, recall that
  $R_{ij}=\paren{\delta_i -\delta_j}^\top L_G^{+}\paren{\delta_i -\delta_j}$,
  then
  \begin{equation*}
  \sum_{e \in E} \ell_e =\sum_{\sets{i,j} \in E} w_{ij} R_{ij} =\sum_{\sets{{i,j}}\in E} \trace \paren{w_{ij}\paren{\delta_i -\delta_j}\paren{\delta_i -\delta_j}^\top L_G^{+}}=\trace\paren{L_G L_G^{+}}\leq n-1,
  \end{equation*}
  since the rank of $L_G$ is at most $n-1$.
\end{proof}

\subsection{Strongly Rayleigh Distributions and Random Spanning Tree}
Before defining strongly Rayleigh distributions, we first introduce the stability of generating polynomials, a fundamental property in both mathematics and physics. 
For a probability distribution $\mu:\binom{\sqb{m}}{k}\to \RR_{\geq 0}$, we define the generating polynomial $f_{\mu}$ of $\mu$ as
\begin{equation*}
	f _{\mu}\parens{z_1,\ldots, z_m}= \sum_{S \in \binom{\sqb{m}}{k} }\mu\parens{S}\cdot \prod_{i \in S}z_i .
\end{equation*}
A nonzero polynomial $f\in\CC\sqb{z_1,\ldots,z_m}$ is called stable if, $\forall i \in \sqb{m}$, the fact that $\Im\parens{z_i}>0$ implies that $f \paren{z_1,\ldots,z_m}\neq 0$. A stable polynomial with real coefficients is called real stable.

We say $\mu$ is a strongly Rayleigh distribution if and only if $f_{\mu}$ is a real stable polynomial. We say $\mu$ is $k$-homogeneous if $f_{\mu}$ is $k$-homogeneous.
A useful fact is that if $\mu$ is strongly Rayleigh, then its restricted distribution $\mu _S$ 
for arbitrary $S\subseteq \sqb{m}$ is also strongly Rayleigh. For more details, we refer the reader to~\cite{borcea2009negative}.

For any connected undirected weighted graph $G=\parens{V,E,w}$, let $\cT_G \subseteq \binom{\sqb{m}}{n-1}$ denote the set of all spanning trees of $G$. We provide the formal definition of random spanning tree below.

\begin{definition}[$w$-uniform distribution on trees]
For any connected weighted undirected graph $G =\parens{V,E,w}$, the distribution $\cW_G$ on $\cT_G$ is $w$-uniform if $\pb_{X\sim \cW_G}\sqb{ X=T}\propto \prod _{e \in T} w _e$.
\end{definition}
We refer to $\cW_G$ as the $w$-uniform distribution on $\cT_G$.
A random spanning tree drawn from the distribution $\cW_G$ is called $w$-uniform.

\begin{proposition}[Spanning Tree Marginals]\label{lemma:Spanning-Tree-Marginals}
  Given a graph $G=\parens{V,E,w}$, the probability that an edge $e=\sets{i,j} \in E $
  appears in a tree sampled $w$-uniformly randomly from $\cT_G $ is given by its
  leverage score $\ell_e = w_e R_{ij}$.
  Namely, $\pb_{ T \sim \cW_{G } }\sqb{e\in T}=\ell_e$.
\end{proposition}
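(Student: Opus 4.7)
The plan is to prove this classical fact via Kirchhoff's weighted matrix-tree theorem combined with a one-parameter perturbation of the weight $w_e$ and the matrix determinant lemma. Write $Z := \sum_{T\in\cT_G}\prod_{e'\in T} w_{e'}$, so that $\mathbb{P}_{T\sim\cW_G}[e\in T] = Z^{-1}\sum_{T\ni e}\prod_{e'\in T} w_{e'}$. By the weighted matrix-tree theorem, $Z = \det(L_G^{(v,v)})$ for any vertex $v$, where $L_G^{(v,v)}$ is $L_G$ with the $v$-th row and column removed.

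The main idea is to fix $e=\{i,j\}$ and consider the family of graphs $G_t$ obtained from $G$ by scaling the weight of $e$ from $w_e$ to $(1+t)w_e$, while keeping all other weights fixed. On the one hand, direct expansion of the spanning-tree polynomial in $t$ yields
\begin{equation*}
Z(G_t) \;=\; \sum_{T\not\ni e}\prod_{e'\in T} w_{e'} \;+\; (1+t)\sum_{T\ni e}\prod_{e'\in T} w_{e'} \;=\; Z\bigl(1 + t\,\pb_{T\sim\cW_G}[e\in T]\bigr).
\end{equation*}
On the other hand, the Laplacian of $G_t$ is $L_G + tw_e(\delta_i-\delta_j)(\delta_i-\delta_j)^\top$, so applying matrix-tree again and then the matrix determinant lemma to the rank-one update of $L_G^{(v,v)}$ gives
\begin{equation*}
Z(G_t) \;=\; \det(L_G^{(v,v)})\bigl(1 + tw_e\,(\delta_i-\delta_j)_{\setminus v}^\top (L_G^{(v,v)})^{-1}(\delta_i-\delta_j)_{\setminus v}\bigr).
\end{equation*}
Comparing the coefficients of $t$ in these two identities, it remains to check that $(\delta_i-\delta_j)_{\setminus v}^\top (L_G^{(v,v)})^{-1}(\delta_i-\delta_j)_{\setminus v} = R_{ij}$.

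This last identity is the only place that requires a small computation and will be the main (minor) obstacle to watch. The point is that $\delta_i-\delta_j$ is orthogonal to the all-ones vector $\mathbf{1}$, which spans $\ker(L_G)$; hence $\delta_i-\delta_j\in\operatorname{range}(L_G)$, so there is a unique $y\in\mathbf{1}^\perp$ with $L_G y = \delta_i-\delta_j$, namely $y = L_G^+(\delta_i-\delta_j)$. Choosing instead the representative $\tilde y$ with $\tilde y_v=0$ yields $L_G^{(v,v)}\,\tilde y_{\setminus v} = (\delta_i-\delta_j)_{\setminus v}$, so $\tilde y_{\setminus v} = (L_G^{(v,v)})^{-1}(\delta_i-\delta_j)_{\setminus v}$. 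Since $y$ and $\tilde y$ differ by a multiple of $\mathbf{1}$ and $\delta_i-\delta_j$ is orthogonal to $\mathbf{1}$, the two quadratic forms $(\delta_i-\delta_j)^\top y$ and $(\delta_i-\delta_j)^\top \tilde y$ agree, both equalling $R_{ij}$ by definition. Combined with the coefficient comparison this gives $\pb_{T\sim\cW_G}[e\in T] = w_e R_{ij} = \ell_e$, as claimed.
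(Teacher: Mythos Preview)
Your proof is correct. The paper states this proposition without proof, treating it as a classical fact (it is the standard identification of spanning-tree marginals with leverage scores, going back to Kirchhoff). Your argument via the weighted matrix-tree theorem, a rank-one perturbation of $L_G$, and the matrix determinant lemma is a clean and standard way to establish it; the identification $(\delta_i-\delta_j)_{\setminus v}^\top (L_G^{(v,v)})^{-1}(\delta_i-\delta_j)_{\setminus v} = R_{ij}$ is handled correctly by exploiting that any two solutions of $L_G y = \delta_i-\delta_j$ differ by a multiple of $\mathbf{1}$, to which $\delta_i-\delta_j$ is orthogonal.
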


\begin{proposition}[Spanning Trees are Strongly Rayleigh {\cite[Theorem 3.6]{borcea2009negative}}]
	For any connected weighted undirected graph $G =\parens{V,E,w}$, the $w$-uniform distribution $\cW_G$ is $\parens{n-1}$-homogenous strongly Rayleigh.
\end{proposition}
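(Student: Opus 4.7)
The plan is to identify the generating polynomial $f_{\cW_G}$ with the determinant of a linear matrix pencil via Kirchhoff's matrix-tree theorem, and then to invoke the classical real-stability result for determinants of pencils with positive semidefinite coefficient matrices.

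First I would observe that since $\cW_G(T) \propto \prod_{e \in T} w_e$, the generating polynomial is, up to a positive normalizing constant,
\begin{equation*}
f_{\cW_G}(z_1, \ldots, z_m) \;\propto\; \sum_{T \in \cT_G} \prod_{e \in T} w_e z_e.
\end{equation*}
Every spanning tree of a connected graph on $n$ vertices has exactly $n-1$ edges, so the right-hand side is $(n-1)$-homogeneous in the $z_e$'s. Thus $(n-1)$-homogeneity is immediate, and the substance of the statement is real stability.

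For the stability part I would fix an arbitrary reference vertex $r \in V$ and apply Kirchhoff's matrix-tree theorem to rewrite
\begin{equation*}
\sum_{T \in \cT_G} \prod_{e \in T} w_e z_e \;=\; \det\bigl( L_G(wz)_{-r} \bigr),
\end{equation*}
where $L_G(wz)_{-r}$ is the principal $(n-1)\times(n-1)$ submatrix of the weighted Laplacian $L_G(wz) = \sum_{e=\{i,j\}\in E} w_e z_e (\delta_i - \delta_j)(\delta_i - \delta_j)^\top$ obtained by deleting the row and column indexed by $r$. Each $w_e (\delta_i - \delta_j)(\delta_i - \delta_j)^\top$ is rank-one PSD, and passing to a principal submatrix preserves positive semidefiniteness, so we can write
\begin{equation*}
f_{\cW_G}(z) \;\propto\; \det\Bigl(\sum_{e \in E} z_e B_e\Bigr)
\end{equation*}
for an explicit family of PSD matrices $B_e \in \RR^{(n-1)\times(n-1)}$.

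The final step is to invoke the classical fact that for any Hermitian PSD matrices $B_1, \ldots, B_m$, the polynomial $\det(\sum_e z_e B_e)$ is either identically zero or real stable. Connectivity of $G$ guarantees at least one spanning tree, so $f_{\cW_G}$ is nonzero and real stability follows. The main conceptual obstacle, for a fully self-contained argument, is justifying this matrix-pencil real-stability lemma; the cleanest route is to quote the Borcea-Branden characterization of linear stability preservers, but one can also proceed more directly using the Cauchy-Binet expansion together with Hurwitz's theorem applied to univariate restrictions. Either route reduces the proposition to standard technology in the theory of stable polynomials.
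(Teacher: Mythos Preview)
Your argument is correct. The paper does not supply its own proof of this proposition; it simply cites it as Theorem~3.6 of Borcea--Br\"and\'en~\cite{borcea2009negative}. Your sketch---matrix-tree theorem plus the real stability of $\det\bigl(\sum_e z_e B_e\bigr)$ for PSD matrices $B_e$---is precisely the standard derivation (and indeed the route taken in the cited reference), so there is nothing to add beyond noting that you have reconstructed the proof the paper elected to outsource.
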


\begin{theorem}[Random Tree Sampling~{\cite[Theorem 2]{anari2021log}}]\label{thm:classical-random-tree-sampling}
	There is an algorithm $\RST\parens{G,\varepsilon}$ that, given a connected weighted graph $G=\parens{V,E,w}$ and $\varepsilon>0$, outputs a random spanning tree in $O\parens{m\log m\log\parens{m/\varepsilon}}$ time. The distribution of output is guaranteed to be $\varepsilon$-close to $\cW_G$ in total variation distance.
\end{theorem}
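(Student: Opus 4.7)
The plan is to realize $\RST$ as an up-down random walk on $\cT_G$ whose stationary distribution is $\cW_G$, and then separately bound its mixing time and the cost of a single step. Given a current tree $T$, the up step samples an edge $e$ of $G$ (allowing $e \in T$), and the down step removes an edge $e'$ from the unique cycle of $T \cup \set{e}$, with the two sampling rules chosen via standard matroid basis-exchange so that the chain is reversible with respect to $\cW_G$. Correctness then reduces to running the chain past its mixing time and reading off the current tree.

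For the mixing time, I would exploit that $\cW_G$ is $\parens{n-1}$-homogeneous strongly Rayleigh, and invoke the modified log-Sobolev theory for log-concave polynomials to obtain
\begin{equation*}
\cD_{\KL}\parens{\nu P \parallel \cW_G} \leq \parens{1-\alpha}\cD_{\KL}\parens{\nu \parallel \cW_G}
\end{equation*}
with $\alpha = \Omega\parens{1/m}$. Plugging this into \cref{thm:MLSI-mixing}, together with the bound $\log\log\parens{1/\cW_G\parens{T}} = O\parens{\log m}$ (assuming polynomially bounded weights, so that $1/\cW_G\parens{T}$ is at most an inverse polynomial in $m$ times a weight ratio), gives a mixing time of $O\parens{m\log\parens{m/\varepsilon}}$ from any starting tree; a convenient warm start is any spanning tree, computable in $O\parens{m}$ time.

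The per-step cost is controlled by maintaining $T$ as a link-cut tree. Sampling the up-step edge $e$ takes $O\parens{\log m}$ time after an $O\parens{m}$ preprocessing pass that builds a cumulative-weight structure over the edges. If $e = \sets{u,v} \notin T$, the cycle in $T \cup \set{e}$ is the $u$-to-$v$ path in $T$, and link-cut trees support weighted sampling along this path followed by the cut/link swap in amortized $O\parens{\log n}$ time. Multiplying the $O\parens{m\log\parens{m/\varepsilon}}$ mixing time by the $O\parens{\log m}$ per-step cost yields the claimed $O\parens{m\log m \log\parens{m/\varepsilon}}$ bound, with the total variation guarantee immediate from the definition of mixing time.

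The main obstacle is the first step: proving the modified log-Sobolev inequality with $\alpha = \Omega\parens{1/m}$ for the up-down chain on a general strongly Rayleigh distribution. Unlike a spectral-gap estimate, this is not a routine argument; it requires the entropic independence framework and the trickle-down analysis of local walks on the simplicial complex associated to the log-concave polynomial $f_{\cW_G}$, which constitute the genuine technical heart of~\cite{anari2021log}. Everything else — reversibility of the chain, the link-cut tree implementation, and the passage from mixing time to total variation distance — is comparatively routine.
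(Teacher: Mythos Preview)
The paper does not actually prove this statement: it is quoted as \cite[Theorem~2]{anari2021log} and used purely as a black-box subroutine, with no accompanying proof in the paper. There is therefore nothing in the paper to compare your proposal against.

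That said, your sketch is an accurate outline of the argument in \cite{anari2021log}, and it agrees with the paper's own informal summary of that work in the Techniques section: an up-down basis-exchange walk on $\cT_G$ with stationary distribution $\cW_G$, mixing time $\widetilde O(m)$ established via the modified log-Sobolev/entropic-independence machinery for strongly Rayleigh distributions, and amortized $\widetilde O(1)$ per-step cost using link-cut trees. You also correctly identify the MLSI with $\alpha = \Omega(1/m)$ as the genuinely hard step. One minor correction: to obtain $\log\log\bigl(1/\cW_G(T_0)\bigr) = O(\log m)$ without assuming polynomially bounded weights, the warm start should be a maximum weight-product spanning tree (so that $\cW_G(T_0) \geq 1/\binom{m}{n-1}$ by pigeonhole, as the paper itself notes in the proof of \cref{prop:isotropic-graph-mixing-time}), rather than an arbitrary spanning tree.
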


\subsection{Down-up and Up-down Walks}
For a distribution $\mu$ on $\binom{\sqb{m}}{k}$, 
we define the  complement
distribution $\bar\mu$ on $\binom{\sqb{m}}{m -k}$ by setting 
$\bar \mu \parens{S}\defeq\mu\parens{\sqb{m}\smallsetminus S}$ for every $S \in \binom{\sqb{m}}{m -k}$.

\begin{definition}[Down operator]
For $\ell \leq k$ define the row-stochastic matrix $D_{k \to \ell}\in \RR_{\geq 0}^{\binom{\sqb{m}}{k}\times \binom{\sqb{m}}{\ell}}$ by
\begin{equation*}
	D_{k \to \ell}\parens{S,T}=\begin{cases}
		0 & \text{ if } T \nsubseteq S,\\
		\frac{1 }{\binom{k}{\ell}} & \text{ otherwise. }
	\end{cases}
\end{equation*}
\end{definition}
For a distribution $\mu$ on sets of size of $k$, $\mu D_{k \to \ell}$ will be a distribution on sets of size of $\ell$. In particular, $\mu D_{k \to 1}$ will be the marginal distribution of $\mu$: $\pb_{S\sim\mu}\parens{i\in S}/k$, $i \in \sqb{m}$.
\begin{definition}[Up operator] For $\ell \leq k$ define the row-stochastic matrix $U_{\ell \to k}\in \RR_{\geq 0}^{\binom{\sqb{m}}{\ell}\times \binom{\sqb{m}}{k}}$ by
\begin{equation*}
		U_{\ell \to k}\parens{T,S}=\begin{cases}
		0 & \text{ if } T \nsubseteq S,\\
		\frac{\mu\parens{S}}{\sum_{S^\prime : T\subseteq S^\prime}\mu \parens{S^\prime} } & \text{ otherwise. }
	\end{cases}
\end{equation*}
\end{definition}
We consider the following Markov chain $M_{\mu}^t$ defined for any positive integer $t\geq k+1$, with the state space $\supp\parens{\mu}$. Starting from $S_0 \in \supp\parens{\mu}$, one step of the chain $M_{\mu}^t$ is:
\begin{enumerate}
	\item Sample $T\in \binom{\sqb{m}\smallsetminus S_0}{t -k}$  uniformly randomly.
	\item Sample $S_1\sim \mu_{S_0 \cup T}$ and update $S_0$ to be $S_1$.
\end{enumerate}

In~\cite{anari2021domain, ALV22}, it was demonstrated that the above constructed Markov chain can be used for sampling, as it possesses several desirable properties.

\begin{proposition}[Proposition 25 in~\cite{anari2021domain}, see also Proposition 15 and 16 in~\cite{ALV22}]\label{prop:complement-down-up-chain}
The complement of $S_1$ is distributed according to $\bar\mu_0 D_{\parens{m -k}\to \parens{m-t}} U _{\parens{m -t}\to \parens{m-k}}$ if we start with $S_0 \sim \mu_0$. Moreover, for any distribution $\mu: \binom{\sqb{m}}{k}\rightarrow \RR_{\geq 0}$ that is strongly Rayleigh, the chain $M_\mu^t$ for $t \geq k+1$ is irreducible, aperiodic and has stationary distribution $\mu$.
\end{proposition}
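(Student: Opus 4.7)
The plan is to prove the two claims separately: part (i) is a bookkeeping exercise that translates each step of $M_\mu^t$ into operations on complements, while part (ii) reduces to showing three standard properties (stationarity, aperiodicity, irreducibility) of the chain, with the last being the only nontrivial one and the only place where the strongly Rayleigh hypothesis really enters.

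For part (i), I would trace the joint distribution of $(S_0, T, S_1)$ and re-express it in terms of the complementary sets $\bar S_0 = \sqb{m} \smallsetminus S_0$, $\bar U = \sqb{m} \smallsetminus (S_0 \cup T)$, and $\bar S_1 = \sqb{m} \smallsetminus S_1$, of sizes $m-k$, $m-t$, and $m-k$ respectively. Since $T$ is sampled uniformly from $\binom{\bar S_0}{t-k}$, the set $\bar U$ is a uniformly random size-$(m-t)$ subset of $\bar S_0$, which is exactly the action of $D_{(m-k)\to(m-t)}$ on $\bar S_0$. Then $S_1 \sim \mu_{S_0 \cup T}$ is the statement that $\pb(S_1 \mid S_0, T) \propto \mu(S_1)\cdot \mathds{1}\sqb{S_1 \subseteq S_0 \cup T}$; rewriting under complementation gives $\pb(\bar S_1 \mid \bar U) \propto \bar\mu(\bar S_1)\cdot \mathds{1}\sqb{\bar S_1 \supseteq \bar U}$, which is exactly the action of $U_{(m-t)\to(m-k)}$ (with respect to $\bar\mu$) on $\bar U$. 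Composing yields the claimed formula.

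For part (ii), stationarity of $\mu$ follows by plugging $\mu_0 = \mu$ into the formula from (i) and checking that $\bar\mu$ is fixed by $D_{(m-k)\to(m-t)}U_{(m-t)\to(m-k)}$, which is immediate from the definition of the up operator (the weights in $U$ are precisely $\bar\mu$-proportional). Aperiodicity comes from the fact that $t \geq k+1$ guarantees a positive self-loop probability: starting from any $S_0 \in \supp(\mu)$, for every choice of $T$ the set $S_0$ itself lies in $\supp(\mu_{S_0 \cup T})$ (since $\mu(S_0)>0$ and $S_0 \subseteq S_0 \cup T$), so $\pb(S_1 = S_0) > 0$.

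The main obstacle is irreducibility. Here I would invoke the well-known consequence of strong Rayleigh-ness called the \emph{exchange property}: for any $S, S' \in \supp(\mu)$ of equal size and any $i \in S \smallsetminus S'$, there exists $j \in S' \smallsetminus S$ such that $(S \smallsetminus \sets{i}) \cup \sets{j} \in \supp(\mu)$. Iterating this gives a sequence $S = S^{(0)}, S^{(1)}, \ldots, S^{(r)} = S'$ inside $\supp(\mu)$ where consecutive terms differ by a single swap. It then suffices to show that each single-swap transition has positive probability under $M_\mu^t$: choosing $T$ to contain the new element $j$ (possible because $t \geq k+1$) ensures $S^{(\ell+1)} \subseteq S^{(\ell)} \cup T$, and since $\mu(S^{(\ell+1)})>0$ we have positive probability under $\mu_{S^{(\ell)} \cup T}$. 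Chaining these transitions yields irreducibility, and combined with aperiodicity and the verified stationary distribution, Theorem-style convergence of $M_\mu^t$ to $\mu$ follows.
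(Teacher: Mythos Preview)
The paper does not actually prove this proposition: it is quoted verbatim from \cite{anari2021domain} (Proposition~25) and \cite{ALV22} (Propositions~15 and~16), with no argument supplied here. So there is no in-paper proof to compare against.

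That said, your outline is sound and matches the standard argument in those references. The bookkeeping in part~(i) is exactly right: passing to complements turns the uniform up-step into the down operator $D_{(m-k)\to(m-t)}$ and the $\mu$-conditional down-step into the up operator $U_{(m-t)\to(m-k)}$ for~$\bar\mu$. For part~(ii), stationarity and aperiodicity are immediate as you say. The one place to be careful is your appeal to the exchange property: you should cite (or at least name) the fact that the support of a homogeneous strongly Rayleigh distribution is the set of bases of a matroid (this goes back to Choe--Oxley--Sokal--Wagner and Br\"and\'en on jump systems/M-convexity of supports of stable polynomials), which is what guarantees the single-swap connectivity you use. With that reference in hand, your irreducibility argument is complete.
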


\section{Main Algorithm}

Our main quantum algorithm implements a ``large-step'' up-down walk over the spanning trees of~$G$.
Starting from a spanning tree $T$, the up-step samples $k = 2n$ edges $S$ uniformly at random from $E \setminus T$, and then the down-step samples a uniform spanning tree from the subgraph on $T \cup S$.
We can show that, if the graph is  nearly-isotropic,
meaning that its edges have approximately uniform leverage scores,
this walk achieves a mixing time of $\widetilde O(1)$ (see \cref{prop:isotropic-graph-mixing-time}).

In practice, most input graphs are not naturally isotropic. Isotropy here refers to a balanced structure where every edge contributes roughly equally to the connectivity of the graph. This balanced contribution is captured by the concept of leverage scores, which, in our setting, are given by the product of the effective resistances of the edges and their corresponding weights. The effective resistance of an edge serves as a measure of its “importance” in maintaining connectivity: edges with high effective resistance (or high leverage scores) have an outsized influence on the spanning tree distribution. If these values vary significantly across edges, the uniform sampling in the up-step may become biased, adversely affecting the mixing time of the chain.
To address this, we introduce a preprocessing step that ``isotropizes'' $G$ by approximately normalizing the effective resistances. 
Specifically, we construct a quantum data structure $\QResistance$ that enables efficient quantum queries to constant-factor approximations of the effective resistances of the edges in $G$. This builds upon the quantum graph sparsification algorithm from~\cite{AdW22}, which efficiently estimates effective resistances.
This preprocessing step, which we detail in \cref{prop:quantum-effective-resistance-oracle}, takes $\widetilde{O}(\sqrt{mn})$ time. Once initialized, it allows us to simulate quantum query access to an isotropic-transformed version of $G$, denoted as $G^{\prime}$. 
With quantum query access to $G^{\prime}$, we can then implement the up-step efficiently by uniformly sampling $k$ edges, thereby preserving the desired rapid mixing properties of the Markov chain.

As initial state of the random walk, the algorithm obtains a maximum weight-product spanning tree $T$ by running the quantum algorithm
$\QMaxProductTree$, a slightly modified version of the algorithm from~\cite{DHHM06} (see \cref{thm:min-spanning-tree-finding} for details).
We label the tree with all its edges having $1$ as a label, i.e., $e\to (e, 1)$
(the label is used for identification in the isotropic-transformed graph) and
store it in QRAM\@.
Subsequently, the algorithm executes $\widetilde{O}(1)$ rounds of the up-down
walk.
To implement the up-step, we develop a quantum algorithm $\QIsoSample$
for uniformly sampling $k$ edges $S$ in the isotropic-transformed graph.
The key idea is to combine a routine for ``preparing many copies of quantum
states'' with the leverage score oracle access given by our
$\QResistance$ algorithm.
We detail this algorithm in \cref{thm:quantum-algo-for-isotropic-sampling}.
Notably, the algorithm $\QIsoSample$ returns a sparsified domain of
$k = 2n$ edges in $\widetilde{O}(\sqrt{mn})$ time.

Using oracles to $G$ and $\cR$, we then build the subgraph with edges $T \cup S$
using the $\SubgraphConstruct$ procedure.
This subgraph has $O(n)$ edges, and so we can apply the \emph{classical}
spanning tree sampling algorithm $\RST$
(\cref{thm:classical-random-tree-sampling}) to obtain a random spanning tree in
the subgraph in $\widetilde{O}(n)$ time.
This effectively implements the down-operator.

We summarize the algorithm below.

\begin{algorithm}[H]
  \caption{Quantum Algorithm for Random Spanning Tree Sampling,
    $\QRST(\cO_G,\varepsilon)$}\label{alg:qust}
  \begin{algorithmic}[1] 
    \REQUIRE{} Quantum Oracle $\cO _G$ to a graph $G=(V,E,w)$ with
    $\abss{V}=n,\abss{E}=m$ and weights $w: E\to \RR_{\geq0}$; accuracy $\varepsilon >0$.
    \ENSURE{} A spanning tree from a distribution which is $\varepsilon$-close to $\cW_G$ in total variation distance.
    \STATE{} $\cR \gets \QResistance(\cO_G, \frac{1}{10})$.
    \STATE{} 
    $k\gets 2 n, M \gets O \parens{\log ^3 n\cdot \log (1/\varepsilon)}$.
    \STATE{} $T_{0}\gets \QMaxProductTree(\cO_{G})$,
    $T_{0}^\prime\gets \Label\parens{T_0}$.
    \STATE{}  $\cT\gets \QStoreTree(T_0^\prime)$.
    \FOR{$t=1$ to $M$}
    \STATE{} $S^\prime_{t}\gets \QIsoSample(\cO_{G}, \cT, \cR, k)$.
    \STATE{} $H_{t}^\prime\gets \SubgraphConstruct\parens{\cO_G,\cR, T^{\prime}_{t-1}, S_{t}^\prime}$.
    \STATE{} $T_t^\prime\gets \RST\parens{H^{\prime}_t, \varepsilon/\parens{2M}}$, $\cT\gets \QStoreTree(T_t^\prime)$.
    \ENDFOR{}
    \STATE{} Return the spanning tree $T^\prime_{M}$ without edge label.
  \end{algorithmic}
\end{algorithm}

More details of the subroutines employed in the algorithm are given below:
\begin{itemize}
\item
The procedure $\Label(T_0)$ means that we
add a specific label $1$ to all the edges of $T_0$. 
\item
The procedure $\QStoreTree(T)$ stores the
tree $T$ into a QRAM allowing for coherent quantum queries,
given its classical description as the input.
\item
The procedure 
$\SubgraphConstruct(\cO_G, \cR, T_{t-1}', S_t')$ constructs the subgraph after the up-step.
More precisely, given query access $\cO_G$ to 
$G = (V, E, w)$, 
query access $\cR$ to approximate resistance $\widetilde{R}$, a classical description of a tree $T_{t-1}'$
in $G$, and a set $S_t'$ of labeled edges,
we define the multigraph
$G^\prime=\parens{V,E^\prime, w^\prime}$,
where 
$E^\prime=\set{\parens{e,j} \mid e \in E, j \in \sqb{q_e} }$ with
$q_e=\ceil{\frac{mw_e\widetilde{R}_e}{n}}$ and
$w'_{\parens{e,j}} =  w_{e}/q_e$.
We treat $T_{t-1}'$
and $S_t'$
as subsets of edges in~$G^{\prime}$.
The procedure $\SubgraphConstruct$ just
outputs a classical description of the graph
$H_t' = (V, E_{t}',w_{t}')$,
where
$E_{t}' = T_{t-1}'\bigcup S_t'$,
and $w_{t}'(f) = w'(f)$ for $f\in E_{t-1}'$.
\end{itemize}

\subsection{Isotropic Transformation}


Inspired by~\cite{anari2020isotropy,anari2021domain,ALV22}, we consider a process that transforms a graph into a multigraph, ensuring that the resulting random spanning tree distribution is nearly-isotropic.

\begin{definition}[Isotropic Transformation of a Graph]\label{def:graph-isotropic-transformation}
    Consider a 
    connected weighted graph $G =\parens{V,E, w}$,
    with $\lambda$-leverage score overestimates $\widetilde\ell\in\RR^E_{\geq 0}$ for $G$.
    The isotropic transformed multigraph $G'$ of $G$
    (with respect to $\widetilde{\ell}$)
    is defined
    as $G^\prime =\parens{V,E^\prime, w ^\prime}$:
    Here $E^\prime=\set{(e,j) \mid e \in E,  j \in \sqb{q_e} }$, 
    where $(e,j)$ connects the same vertices as $e$, 
    and $q_e =\ceil{\frac{m \widetilde\ell_e}{\lambda}}$;
    $w^\prime _{(e,j)}=w _e / q_e $ for all  $e \in E$ and $ j \in \sqb{q_e}$. 
\end{definition}

\begin{lemma}[Isotropic Bound]\label{le:isotropic-transform-on-graph}
For a graph $G = (V, E, w)$
with $\abss{V} = n$ and $\abss{E} = m$,
let $\widetilde{\ell}\in \mathbb{R}^E_{\geq 0}$ be 
its $\lambda$-leverage score overestimates.
Let $G' = (V, E', w')$ denotes
the isotropic-transformed graph of $G$
with respect to $\widetilde{\ell}$.
Then,
it holds that
$\abss{E^\prime}\leq 2 \abss{E}$.
Furthermore,
for all $(e, j) \in E ^\prime$, 
we have
\[
\pb_{S \sim \cW_{G ^\prime}}\sqb{\parens{e,j}\in S} \leq\frac{\lambda}{m}.
\]
\end{lemma}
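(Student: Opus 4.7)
\medskip

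\noindent\textbf{Proof plan for \cref{le:isotropic-transform-on-graph}.}
The statement splits naturally into two claims, and both reduce to elementary computations once we identify the right invariant of the transformation, namely that $G'$ is \emph{electrically equivalent} to $G$.

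\medskip

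\noindent\emph{Bounding $|E'|$.} By definition $|E'|=\sum_{e\in E} q_e=\sum_{e\in E}\lceil m\widetilde{\ell}_e/\lambda\rceil$. Using $\lceil x\rceil\le x+1$ for every $e$ with $\widetilde{\ell}_e>0$ (and $q_e=0$ when $\widetilde{\ell}_e=0$, which can happen since $\widetilde{\ell}$ is an overestimate of a potentially zero leverage score), we get
\begin{equation*}
|E'|\le \sum_{e\in E}\Big(\frac{m\widetilde{\ell}_e}{\lambda}+1\Big)\le \frac{m}{\lambda}\cdot\lambda+m=2m,
\end{equation*}
where the middle inequality uses the defining property $\|\widetilde{\ell}\|_1\le\lambda$. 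This gives the first claim.

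\medskip

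\noindent\emph{Bounding the marginal.} The key observation is that in $G'$ the edge $e\in E$ has been replaced by $q_e$ parallel copies, each of weight $w_e/q_e$. Since parallel edges add conductances, the total conductance between the two endpoints of $e$ in $G'$ equals $q_e\cdot(w_e/q_e)=w_e$, matching the conductance in $G$. Consequently $G$ and $G'$ induce the same Laplacian (viewed as a function on $V\times V$), so the effective resistance $R'_{(e,j)}$ in $G'$ between the endpoints of any copy $(e,j)$ equals $R_e$ in $G$. By \cref{lemma:Spanning-Tree-Marginals} applied to $G'$, the marginal of $(e,j)$ under $\cW_{G'}$ is its leverage score in $G'$, namely
\begin{equation*}
\pb_{S\sim \cW_{G'}}[(e,j)\in S]=w'_{(e,j)}\cdot R'_{(e,j)}=\frac{w_e}{q_e}\cdot R_e=\frac{\ell_e}{q_e}.
\end{equation*}
Finally, since $\widetilde{\ell}_e\ge\ell_e$, we have $q_e=\lceil m\widetilde{\ell}_e/\lambda\rceil\ge m\ell_e/\lambda$, so $\ell_e/q_e\le \lambda/m$ (taking the inequality trivially when $\ell_e=0$). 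This is the desired bound.

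\medskip

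\noindent\emph{Main obstacle.} The only subtle point is the claim that subdividing an edge into parallel copies preserves the effective resistance; this is a standard fact about electrical networks but deserves to be stated explicitly. Everything else is a short calculation using the definitions of $q_e$ and of a $\lambda$-leverage score overestimate, together with \cref{lemma:Spanning-Tree-Marginals}.
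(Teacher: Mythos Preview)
Your proposal is correct and follows essentially the same approach as the paper: both parts use the same chain of inequalities, invoking $\lceil x\rceil\le x+1$ and $\|\widetilde{\ell}\|_1\le\lambda$ for the edge-count bound, and \cref{lemma:Spanning-Tree-Marginals} together with $R'_{(e,j)}=R_e$ (since the parallel copies leave the Laplacian unchanged) for the marginal bound. The paper presents the final inequality slightly more tersely (writing $w_eR_{ab}/q_e\le\lambda/m$ directly), and you provide a bit more justification for the electrical-equivalence step and the $\widetilde{\ell}_e=0$ edge case, but there is no substantive difference.
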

\begin{proof}
	One has
	\begin{equation*}
		\abss{E^\prime}=\sum_{e\in E}q_e\leq \sum_{e\in E}\paren{1+\frac{m \widetilde\ell_e}{\lambda}}\leq m+\frac{m\norms{\widetilde\ell}_1}{\lambda}\leq2m.
	\end{equation*}
	For any $e=\sets{a,b} \in E$ and $j\in \sqb{q_e}$, the marginal 
	\begin{equation*}
		\pb_{S \sim \cW_{G ^\prime}}\sqb{(e, j)\in S}
        = w^\prime_{\paren{e, j}} R'_{ab}
        = w^\prime_{\paren{e, j}} R_{ab}
        = \frac{w_e R_{ab}}{q_e}\leq \frac{\lambda}{m},
	\end{equation*}
    where the first equality follows from \cref{lemma:Spanning-Tree-Marginals}, and the second  equality follows from the fact that $R'_{ab} = R_{ab}$.
\end{proof}

The following theorem demonstrates that the down operator satisfies a significantly stronger entropy contraction inequality when applied to strongly Rayleigh distributions with nearly uniform marginals.

\begin{theorem}[{\cite[Theorem 28]{ALV22}}]\label{thm:isotropy-entropy-contraction}
	Let $\mu: \binom{\sqb{m}}{k}\to \RR_{\geq 0}$ be a strongly Rayleigh distribution with marginals $p\in \RR^{m}$ defined by $p_i\defeq\pb_{S \sim \mu}\sqb{ i \in S}$.
Suppose $\mu$ satisfies
\begin{equation*}
  p_{\max} \defeq \max\set{ p_i: i \in \sqb{m}} \leq  1/500.
\end{equation*}
Then for any distribution $\bar \nu:\binom{\sqb{m}}{m-k}\to \RR_{\geq 0}$ and $t \geq k +1$, we have
\begin{equation*}
	\cD_{\KL}\parens{\bar \nu D _{\parens{m-k}\to \parens{m-t}}\parallel \bar \mu D_{\parens{m-k}\to \parens{m-t}}}\leq \parens{1-\kappa} \cD_{\KL}\parens{\bar \nu\parallel \bar \mu}
\end{equation*}
with $\kappa=\Theta\paren{\frac{ t- k }{\parens{m p_{\max}+t}\log^2 m}}$. 
\end{theorem}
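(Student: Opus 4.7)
The plan is to prove this entropy contraction by combining spectral independence for strongly Rayleigh distributions with the local-to-global framework for high-dimensional walks (in the style of Anari--Liu--Oveis Gharan and its entropy refinements by Chen--Liu--Vigoda and Alev--Lau). The key structural inputs are that (i) strongly Rayleigh distributions are closed under conditioning on ``links'' (pinning a sub-collection of elements to be inside or outside $S$), so every link of $\bar{\mu}$ encountered along the down walk is again strongly Rayleigh; and (ii) for such distributions pairwise correlations are controlled by negative correlation, which in turn bounds the spectral radius of the local walks in terms of the marginals $p_i$.

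First, I would decompose the $(m-k)\to(m-t)$ down operator as the composition of single-step operators $D_{j\to j-1}$ for $j$ running from $m-k$ down to $m-t+1$. By the data processing inequality (\cref{thm:data-processing}) together with a one-step entropy contraction $\cD_{\KL}(\bar{\nu} D_{j\to j-1}\parallel \bar{\mu} D_{j\to j-1})\leq (1-\kappa_j)\,\cD_{\KL}(\bar{\nu}\parallel \bar{\mu})$ at each level, the overall contraction factor $(1-\kappa)$ is at most the telescoping product $\prod_{j}(1-\kappa_j)$. Second, I would establish the one-step inequality via the local-to-global argument: it suffices to bound, for each link, a modified log-Sobolev constant for the random walk on a single coordinate of $\bar{\mu}$ conditioned on the link. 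For strongly Rayleigh links the spectral gap of this walk is lower-bounded in terms of the conditional marginals, and a Bernoulli-type comparison (the conditional distribution of each coordinate is a $\{0,1\}$-Bernoulli with probability controlled by $p_{\max}$) turns the spectral bound into a log-Sobolev bound with a logarithmic loss.

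Quantitatively, I expect the single-step contraction at level $j$ to behave like $\kappa_j=\Theta\bigl(1/((mp_{\max}+j)\log^2 m)\bigr)$ under the assumption $p_{\max}\le 1/500$, so that summing $\kappa_j$ over the $(t-k)$ levels $j=m-t+1,\dots,m-k$ yields the stated $\kappa=\Theta\bigl((t-k)/((mp_{\max}+t)\log^2 m)\bigr)$. The main obstacle I anticipate is the passage from the link spectral bound (which the strong Rayleigh property gives cleanly through negative correlation) to a modified log-Sobolev bound tight enough to produce only a $\log^2 m$ factor rather than a prohibitive $\log(1/\mu_{\min})$; this is exactly where the isotropy hypothesis $p_{\max}\le 1/500$ must be used carefully, because it forces the single-coordinate marginals in every link to remain bounded away from~$1$, allowing a uniform Bernoulli log-Sobolev comparison to replace the generic (and much weaker) spectral-to-MLSI translation.
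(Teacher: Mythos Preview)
The paper does not prove this statement at all: it is quoted verbatim as \cite[Theorem 28]{ALV22} and used as a black box in the proof of \cref{prop:isotropic-graph-mixing-time}. There is therefore no ``paper's own proof'' to compare your proposal against.

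As to whether your sketch is a plausible outline of the original argument in \cite{ALV22}: the high-level strategy (decompose the down walk into single steps, use that strongly Rayleigh is closed under links, exploit negative correlation to control the local walks, and then aggregate via a local-to-global entropy inequality) is indeed the standard route for such results and is the one taken there. Two caveats worth flagging. First, your per-step contraction $\kappa_j = \Theta\bigl(1/((m p_{\max}+j)\log^2 m)\bigr)$ is not quite the right shape: the levels $j$ run from $m-t+1$ to $m-k$, which for $t = \Theta(k)$ are all of order $m$, so the sum $\sum_j \kappa_j$ as you wrote it would give $\Theta\bigl((t-k)/(m\log^2 m)\bigr)$ rather than the claimed $\Theta\bigl((t-k)/((mp_{\max}+t)\log^2 m)\bigr)$; the actual argument tracks the conditional marginals in the link more carefully so that the effective ``size'' parameter is $mp_{\max}+t$ rather than $m$. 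Second, the passage from spectral independence to entropy contraction with only a $\log^2 m$ loss does not come from a generic ``Bernoulli log-Sobolev comparison'' as you describe, but from the entropic independence framework specific to log-concave/strongly Rayleigh distributions developed in \cite{ALV22}; a naive spectral-to-MLSI conversion would lose a factor of $\log(1/\mu_{\min})$, which can be polynomial in $m$. So your outline identifies the right ingredients but understates where the real work lies.
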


\begin{proposition}\label{prop:isotropic-graph-mixing-time}
  Let $G = (V, E, w)$  be a weighted graph with $\abss{V} = n$, $\abss{E} = m$, and $w \in \RR^E_{\geq 0}$. Suppose $G$ has $\lambda$-leverage score overestimates given by $\widetilde \ell\in \RR_{\geq 0}^E$, and assume $m \ge 1000 n$ and $\lambda\leq 2n$. Consider the isotropic transformed multigraph  $G'=\parens{V,E^\prime, w^\prime}$  of $G$ with respect to $\widetilde \ell$, as defined in \cref{{def:graph-isotropic-transformation}}.
    Let $t \ge n$ be an integer, $m^\prime=\abss{E^\prime}$, and define the transition matrix
    $P = D _{\parens{m^{\prime}-n+1}\to \parens{m^{\prime}-t}} U_{\parens{m^{\prime}-t}\to \parens{m^\prime-n+1}}$.
    Then, the mixing time satisfies
    \begin{equation}\label{eq:MT}
t_{\mix}\parens{P,T_{\max}^\prime,\varepsilon} = O(\log^3(n) \log(1/\varepsilon)).
    \end{equation}
Here, $T_{\max}^\prime=\sets{\parens{e,1}\mid e\in T_{\max}}\subseteq E^\prime$, where $T_{\max}$ is the maximum weight-product spanning tree of $G$. 
\end{proposition}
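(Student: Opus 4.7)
The plan is to apply \cref{thm:MLSI-mixing} to $P$ after establishing a strong entropy contraction $\cD_{\KL}\paren{\bar\nu P \parallel \bar\mu P} \leq (1-\alpha)\cD_{\KL}\paren{\bar\nu \parallel \bar\mu}$ with $\alpha = \Theta(1/\log^2 n)$, writing $\mu := \cW_{G'}$ for brevity. By \cref{prop:complement-down-up-chain}, $P$ is the complementary form of the up-down walk on spanning trees of $G'$: it is ergodic (since $t \ge n = k+1$) and reversible with stationary $\bar\mu$, and $\mu$ itself is $(n-1)$-homogeneous strongly Rayleigh. By \cref{le:isotropic-transform-on-graph} every marginal of $\mu$ satisfies $p_{(e,j)}\le \lambda/m$, so with $\lambda\le 2n$ and $m\ge 1000n$ we obtain $p_{\max}\le 1/500$ and $m'p_{\max}\le 2\lambda\le 4n$, where $m' = \abs{E'}\le 2m$.

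Applying \cref{thm:isotropy-entropy-contraction} to $\mu$ with $k = n-1$ and $D := D_{(m'-n+1)\to(m'-t)}$ then gives
\[
\cD_{\KL}\paren{\bar\nu D \parallel \bar\mu D} \leq (1-\kappa)\cD_{\KL}\paren{\bar\nu \parallel \bar\mu}, \quad \kappa = \Theta\!\paren{\frac{t-n+1}{(m'p_{\max}+t)\log^2 m'}}.
\]
In the regime relevant to the algorithm (with $t = \Theta(n)$, so both $t-n+1$ and $m'p_{\max}+t$ are $\Theta(n)$, and $\log m' = O(\log n)$ for simple inputs), this simplifies to $\kappa = \Theta(1/\log^2 n)$. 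Since the up operator $U := U_{(m'-t)\to(m'-n+1)}$ cannot increase KL divergence by the data processing inequality, the contraction lifts to $P = DU$:
\[
\cD_{\KL}\paren{\bar\nu P \parallel \bar\mu P} \leq \cD_{\KL}\paren{\bar\nu D \parallel \bar\mu D} \leq (1-\kappa)\cD_{\KL}\paren{\bar\nu \parallel \bar\mu}.
\]

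With $\alpha = \kappa = \Theta(1/\log^2 n)$, \cref{thm:MLSI-mixing} applied at the point mass on $\overline{T_{\max}'}$ (the state in the complement chain corresponding to $T_{\max}'$) gives
\[
t_{\mix}\paren{P, 1_{\overline{T_{\max}'}}, \varepsilon} \leq \ceil{\frac{1}{\alpha}\paren{\log\log\frac{1}{\mu(T_{\max}')} + \log\frac{1}{2\varepsilon^2}}}.
\]
To lower-bound $\mu(T_{\max}')$, I will combine the estimate $Z := \sum_{T\in \cT_G}\prod_{e\in T}w_e \le m^{n-1}\prod_{e\in T_{\max}}w_e$ (since $T_{\max}$ has the maximum weight product and $\abs{\cT_G}\le m^{n-1}$) with $\prod_{e\in T_{\max}}q_e\le (m+1)^{n-1}$. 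This yields $\mu(T_{\max}') = \prod_{e\in T_{\max}}(w_e/q_e)/Z \geq (m(m+1))^{-(n-1)}$, so $\log(1/\mu(T_{\max}')) = O(n\log n)$ and $\log\log(1/\mu(T_{\max}')) = O(\log n)$. Combining, $t_{\mix} = O(\log^2 n)\cdot O(\log n+\log(1/\varepsilon)) = O(\log^3(n)\log(1/\varepsilon))$.

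The main obstacle is the interplay between \cref{thm:isotropy-entropy-contraction}, which provides entropy contraction only for the down operator on the complement side, and the full chain $P = DU$: absorbing $U$ cleanly requires both the identification of $P$ via \cref{prop:complement-down-up-chain} and a careful application of the data processing inequality. A secondary subtlety is the lower bound on $\mu(T_{\max}')$, because the multigraph $G'$ rescales each weight by $1/q_e$, so one must track both the numerator $\prod(w_e/q_e)$ and the normalizer $Z$ to establish $\log\log(1/\mu(T_{\max}')) = O(\log n)$.
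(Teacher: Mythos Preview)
Your proposal is correct and follows essentially the same route as the paper's proof: bound the marginals via \cref{le:isotropic-transform-on-graph}, apply \cref{thm:isotropy-entropy-contraction} to the down operator on the complement, lift to $P=DU$ via the data processing inequality, and conclude with \cref{thm:MLSI-mixing} after lower-bounding $\mu(T_{\max}')$. The only cosmetic difference is that the paper bounds $\mu'(T_{\max}')\ge \mu(T_{\max})/(m')^{n-1}$ with $\mu(T_{\max})\ge 1/\binom{m}{n-1}$ by pigeonhole, whereas you track $\prod_e q_e$ explicitly; both give $\log\log(1/\mu(T_{\max}'))=O(\log n)$, and both proofs (like yours) implicitly specialize to $t=\Theta(n)$ and $m=\mathrm{poly}(n)$ to get $\kappa^{-1}=O(\log^2 n)$.
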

\begin{proof}
    For the isotropic-transformed graph $G'$,
    by \cref{le:isotropic-transform-on-graph},
    we have $m ^\prime \leq 2m$, and
    \begin{equation*}
    \pb_{S \sim \cW_{G ^\prime}}\sqb{(e, j)\in S} \le \frac{\lambda}{m} \le \frac{2n}{m} \leq  \frac{1}{500}.
    \end{equation*}
    Let $\cT_{G^\prime}$ denote the set of all spanning trees of $G^\prime$, and let $\mu^\prime:\binom{\sqb{m^\prime}}{n-1}\to \RR_{\geq 0}$ be the $w^\prime$-uniform distribution on $\cT_{G^\prime}$. Applying \cref{thm:isotropy-entropy-contraction}
    with $t = 2n$ and $k = n-1$,
    we obtain that, for any distribution $\bar{\nu^\prime}:\binom{\sqb{m^\prime}}{m ^\prime -k}\to \RR_{\geq 0}$,
    \begin{equation*}
	\cD_{\KL}\parens{\bar{\nu^\prime}D _{\parens{m^\prime-n+1}\to \parens{m^\prime-t}}\parallel \bar{\mu^\prime} D_{\parens{m^\prime-n+1}\to \parens{m^\prime-t}}}\leq \parens{1-\kappa} \cD_{\KL}\parens{\bar {\nu^\prime}\parallel \bar {\mu^\prime}},
   \end{equation*}
    where $\kappa^{-1} = O(\log^2 m^\prime) = O(\log^2 n)$.
    Thus, by
    the data processing inequality (\cref{thm:data-processing}),
    we have
\[
	\cD_{\KL}\parens{\bar {\nu^\prime} P \parallel \bar{\mu^\prime} P }\leq \parens{1-\kappa} \cD_{\KL}\parens{\bar {\nu^\prime}\parallel \bar {\mu^\prime}}.
\]
Combining the above result with the modified log-Sobolev inequality (\cref{thm:MLSI-mixing}), we conclude that 
\begin{equation}
   t_{\mix}\parens{P,T^\prime_{\max},\varepsilon}\leq \ceil{\frac{1}{\kappa}\cdot\paren{\log\log\paren{\frac{1}{\mu^\prime \parens{T^\prime_{\max}}}}+\log\paren{\frac{1}{2\varepsilon^2} }}}
\end{equation}
     Now, observe that by the pigeonhole principle we have $\mu(T_{\max}) \geq 1/\binom{m}{n-1} = \Omega(1/m^n)$. Consequently,
$\mu^\prime\parens{T^\prime_{\max}}\geq \mu\parens{T _{\max}}/(m ^\prime)^{n-1}=\Omega\parens{1/(2m)^n}
$. Thus, the claim follows as desired.
\end{proof}

As a side remark,  we note that by applying \cref{prop:complement-down-up-chain}, 
we can prove a bound of the mixing time for the complement (up-down walk) in a similar way.

\subsection{QIsotropicSample Routine}
In this subsection, we provide further details about the $\QIsoSample$ routine,
which is designed for implementing the up-step.
The goal of $\QIsoSample$ is to sample a set of $k$ distinct edges from the
isotropic-transformed multigraph uniformly at random.
Throughout the discussion, we refer to this type of task as uniformly random
$k$-subset sampling (of a set $[m]$), which means sampling a uniformly random
$k$-size subset of the set $[m]$.

A crucial distinction between uniformly sampling a $k$-size subset of the set
$[m]$ and uniformly and independently sampling $k$ times from $[m]$ lies in the
sampling method.
The former requires sampling \emph{without replacement}, while the latter
involves sampling \emph{with replacement}.
Notably, the latter can be addressed directly using the ``preparing many copies
of a quantum state'' algorithm (see
\cref{thm:prepare-many-copies-of-a-quantum-state} for more details).
Building on this, we propose a reduction from sampling \emph{with replacement}
to sampling \emph{without replacement} in the following, leading to an efficient
quantum implementation for uniformly random $k$-subset sampling
(\cref{thm:quantum-multiset-sample}) and the $\QIsoSample$ routine
(\cref{thm:quantum-algo-for-isotropic-sampling}).

We begin with the following combinatorial proposition, 
which shows that we can first perform the sampling with replacement, 
and then discard any repeated elements to accomplish 
uniformly random $k$-subset sampling.

\begin{proposition}\label{prop:sample-with-replace-for-without-replacement}
  Let $m\in \mathbb{N}$.
  Suppose $j_1, j_2, \dots, j_k \in [m]$ are integers which are uniformly and
  independently sampled from $[m]$.
  Then, conditioning on the distinctness of these integers, the distribution of
  $\{j_1, j_2, \dots, j_k\}$ is the same as the distribution of a uniformly
  random $k$-subset of $[m]$.
\end{proposition}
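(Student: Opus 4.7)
The plan is to verify the claim by a direct combinatorial calculation on the uniform distribution on ordered tuples, showing that the induced distribution on the unordered set, after the distinctness conditioning, is exactly uniform on $k$-subsets.

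First I would note that the tuple $(j_1, \ldots, j_k)$ is uniformly distributed on $[m]^k$, so its conditional distribution given the event that $j_1, \ldots, j_k$ are pairwise distinct is uniform on the set of ordered $k$-tuples with pairwise distinct entries. The number of such tuples is exactly $m(m-1)\cdots(m-k+1)$.

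Next I would relate this to $k$-subsets: for any fixed $S \in \binom{[m]}{k}$, the preimage of $S$ under the map $(j_1, \ldots, j_k) \mapsto \{j_1, \ldots, j_k\}$, restricted to tuples with distinct entries, consists exactly of the $k!$ orderings of the elements of $S$. Hence, conditioning as specified,
\[
\pb\bigl[\{j_1, \ldots, j_k\} = S \mid j_1, \ldots, j_k \text{ distinct}\bigr] = \frac{k!}{m(m-1)\cdots(m-k+1)} = \frac{1}{\binom{m}{k}},
\]
which is independent of the choice of $S$. This is precisely the uniform distribution on $\binom{[m]}{k}$, so the claim follows.

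There is no real obstacle here; the statement is essentially a restatement of the fact that sampling without replacement is equivalent, as a distribution on unordered sets, to sampling with replacement conditioned on no collisions. The only thing to be careful about is distinguishing the ordered and unordered pictures and counting the $k!$ orderings correctly.
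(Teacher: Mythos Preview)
Your proof is correct and follows essentially the same approach as the paper: both fix an arbitrary $k$-subset $S$, count the $k!$ ordered tuples mapping to it, divide by the total count $m(m-1)\cdots(m-k+1)$ of distinct tuples (the paper writes this as the ratio $\frac{k!/m^k}{m!/((m-k)!\,m^k)}$), and conclude the conditional probability is $1/\binom{m}{k}$.
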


\begin{proof}
  Consider any fixed set $S = \{s_1, s_2, \dots, s_k\}$.
  For the uniformly random $k$-subset sampling, 
  the probability that $S$ occurs
  is exactly $\frac{1}{\binom{m}{k}}$.

  For $j_1, \dots, j_k$ 
  which are uniformly and independently 
  sampled from $[m]$, 
  the probability that they form a permutation of $\{s_1, \dots, s_k\}$
  is $k!/m^k$.
  The probability that they are pairwise distinct 
  is $m!/(m^k (m-k)!)$.
  Therefore, conditioning on the pairwise distinctness, 
  the probability that
  $\{j_1, j_2, \dots, j_k\} = S$ is just
  \[
    \frac{k!/m^k}{m!/\paren{(m-k)!m^k}} = \frac{1}{\binom{m}{k}}. \qedhere
  \]
\end{proof}

We then consider how many samples
are sufficient for $k$ distinct results,
which is formalized in the following
modified balls-into-bins lemma.
\begin{lemma}\label{lemma:balls-into-bins}
  Let $X$ be a uniformly random variable 
  taking values in $[m]$. 
  For $k\in [m]$,
  let $R$ be the set of different results from sampling $X$ for
  $\widetilde{\Theta} \rbra{k}$ times. 
  Then, with high probability,
  we have 
  $\abs{R} \ge k$.
\end{lemma}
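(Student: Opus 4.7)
The plan is to treat this as a coupon-collector waiting-time problem and argue that $T=\widetilde{\Theta}(k)$ samples suffice to collect $k$ distinct elements of $[m]$ with high probability. Let $\tau_j$ denote the first time at which the running sample set contains $j$ distinct elements, and decompose $\tau_k=\sum_{j=1}^{k}(\tau_j-\tau_{j-1})$. By the standard occupancy argument, the increments are mutually independent geometric random variables, with $\tau_j-\tau_{j-1}$ having success probability $(m-j+1)/m$, since after exactly $j-1$ distinct elements have been seen each fresh sample lies in the unseen complement with this probability. Hence $\E[\tau_k]=\sum_{j=1}^{k}m/(m-j+1)$, which is at most $mH_m=O(m\log m)$. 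Showing that $\tau_k=\widetilde{O}(k)$ with high probability immediately yields the lemma by taking $T$ slightly above this concentration value.

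I would split the analysis into two regimes. When $k\leq m/2$, every geometric increment has success probability at least $1/2$, so $\E[\tau_k]\leq 2k$. A direct coupling of the ``fresh-sample'' indicator sequence $Y_i=\mathbf{1}[X_i\notin\{X_1,\ldots,X_{i-1}\}]$ to an i.i.d.\ Bernoulli$(1/2)$ sequence (using the observation that whenever fewer than $k$ distinct values have been seen one has $\Pr[Y_i=1\mid X_1,\ldots,X_{i-1}]\geq 1/2$), combined with a Chernoff bound, shows that $T=O(k+\log(1/\delta))$ samples suffice to observe $k$ distinct elements with probability at least $1-\delta$. When $k>m/2$, instead $\E[\tau_k]=O(m\log m)=\widetilde{O}(k)$ since $k=\Theta(m)$, and one applies a standard concentration inequality for sums of independent geometric random variables (e.g.\ a Chernoff-type bound obtained via the moment generating function) to conclude $\tau_k=\widetilde{O}(k)$ with high probability.

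The main technical delicacy lies in the second regime, where the geometric increments with index $j$ close to $m$ have success probabilities as small as $1/m$ and are therefore individually heavy-tailed; a naive Bernoulli-style Chernoff does not apply. I would handle this either by splitting the sum at $j=m/2$ and bounding the ``light'' and ``heavy'' halves with separate tail arguments, or by directly invoking known tail bounds for sums of heterogeneous independent geometric variables. Either way, choosing $T$ to equal $k$ times a sufficiently large polylogarithmic factor---absorbing both the $\log m$ arising from the coupon-collector expectation and the high-probability overhead $\log(1/\delta)$ for $\delta=1/n^{O(1)}$---guarantees $|R|\geq k$ with high probability, which proves the claim.
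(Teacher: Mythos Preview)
Your proposal is correct and uses the same coupon-collector decomposition as the paper, but the argument you give is considerably more elaborate than what the paper actually does. The paper observes, without any case split, that
\[
\E[\tau_k]=\sum_{i=1}^{k}\frac{m}{m-i+1}\leq \sum_{i=1}^{k}\frac{k}{k-i+1}\leq k\log k,
\]
using the monotonicity of $x/(x-i+1)$ in $x$; this single line already handles both of your regimes at once. From there the paper simply applies Markov's inequality to get $\tau_k\leq 3k\log k$ with probability at least $2/3$, and then boosts to high probability by independent repetition (taking $O(\log n)$ batches of $3k\log k$ samples each). No Chernoff bound, no coupling with Bernoulli$(1/2)$ variables, and no tail bounds for heterogeneous geometric sums are needed. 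Your route does work and in the $k\leq m/2$ regime even gives a slightly sharper $O(k+\log(1/\delta))$ bound, but the paper's Markov-plus-amplification argument is substantially shorter and sidesteps the heavy-tail subtleties you flag in the second regime.
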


\begin{proof}
  Let $Y_i$ denote the number of samples 
  needed for $\abs{R} = i-1$ to
  become $\abs{R} = i$. 
  Let $Y = \sum_{i=1}^{k} Y_i$.
  When $\abs{R} = i-1$, 
  the probability that a sample falls in
  $[m] - R$ is
  $ p_i = \rbra*{m-i+1}/m$.
  By definition, 
  $Y_i$ obeys the geometric distribution,
  meaning that
  \begin{equation*}
    \Pr \sbra*{Y_i = k} = {(1-p_i)}^{k-1}p_i.
  \end{equation*}
  Therefore, $\E [Y_i] = 1/p_i = m/\rbra{m-i+1}$.
  By the linearity of expectations, we have
  \begin{equation*}
    \E \sbra*{Y} 
    = \sum_{i=1}^{k} \E \sbra*{Y_i} 
    = \sum_{i=1}^{k} \frac{m}{\rbra*{m-i+1}} 
    \le \sum_{i=1}^k \frac{k}{\rbra*{k-i+1}}
    \le k\log (k).
  \end{equation*}
  Thus, by Markov's inequality, 
  with probability at least $2/3$,
  $Y\le 3k\log(k)$, 
  and the claim follows by
  amplifying the successful probability using
  Hoeffding's inequality.
\end{proof}

By combining the above propositions and 
\cref{thm:prepare-many-copies-of-a-quantum-state},
we obtain a quantum algorithm for
uniformly random $k$-subset sampling 
from a set,
which is described in the following corollary.

\begin{theorem}[Quantum uniform random $k$-subset  sampling]\label{thm:quantum-multiset-sample}
    There exists a quantum algorithm $\mathsf{MultiSample}(\cO_w, k)$ that,
    given query access $O_w$ to a vector $w\in \mathbb{N}^n$
    (where $\Abs{w}_1 = O(\textup{poly}(n))$),
    and $k\in [n]$,
    with high probability,
    returns a uniformly random $k$-size subset $S$ of
    $S_w = \{(j, \ell)|j\in [n], \ell \in [w_j]\}$.
    The algorithm uses $\widetilde{O}(\sqrt{nk})$
    queries to $\cO_w$ and runs in $\widetilde{O}(\sqrt{nk})$ time.
\end{theorem}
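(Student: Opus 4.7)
The plan is to construct $\mathsf{MultiSample}$ as a reduction from uniform random $k$-subset sampling (without replacement) over $S_w$ to i.i.d.\ sampling with replacement from the uniform distribution on $S_w$, which we handle using the quantum multi-copy state-preparation result (\cref{thm:prepare-many-copies-of-a-quantum-state}). The algorithm has three stages: (i) generate $M = \widetilde{\Theta}(k)$ i.i.d.\ samples uniformly from $S_w$ using the quantum subroutine; (ii) scan the resulting sequence and keep the first $k$ distinct pairs; (iii) output that set.

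For stage (i), the key observation is that a uniform sample $(j,\ell)\in S_w$ can be decomposed as drawing $j\in [n]$ from the distribution $p_j = w_j/\Abs{w}_1$ and then drawing $\ell$ uniformly from $[w_j]$; the resulting pair is uniform on $S_w$ since $\Pr[(j,\ell)] = (w_j/\Abs{w}_1)\cdot(1/w_j) = 1/\Abs{w}_1$. I would therefore invoke \cref{thm:prepare-many-copies-of-a-quantum-state} on the distribution $p$ over $[n]$ to produce $j_1,\ldots,j_M$ in $\widetilde{O}(\sqrt{nM}) = \widetilde{O}(\sqrt{nk})$ queries to $\cO_w$, and then classically query $\cO_w$ once per sample to learn $w_{j_i}$ and pick $\ell_i$ uniformly, yielding the pairs $X_i = (j_i,\ell_i)$. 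The total cost remains $\widetilde{O}(\sqrt{nk})$ queries and time.

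For stages (ii)--(iii), correctness and success probability are handled by the two combinatorial facts already established. By \cref{lemma:balls-into-bins} (applied with $m = \Abs{w}_1$ and $\widetilde{\Theta}(k)$ samples), the stream $X_1,\ldots,X_M$ contains at least $k$ distinct values with high probability, so stage (ii) outputs $k$ elements whenever the quantum stage succeeds. Correctness follows from \cref{prop:sample-with-replace-for-without-replacement} applied inductively along the stream: conditional on the first $i$ distinct elements $\{Y_1,\ldots,Y_i\}$ discovered so far, the next distinct element $Y_{i+1}$ is the first later $X_j$ falling outside this set, and since the $X_j$'s are i.i.d.\ uniform on $S_w$, $Y_{i+1}$ is uniform over $S_w\setminus\{Y_1,\ldots,Y_i\}$. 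Hence the output set is uniform over $k$-subsets of $S_w$.

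The main obstacle is pinning down stage (i) so that the quantum cost scales with the \emph{input size} $n$ rather than the \emph{implicit size} $\Abs{w}_1 = \mathrm{poly}(n)$ of $S_w$. A direct application of the multi-copy theorem to the uniform superposition over $S_w$ would give $\widetilde{O}(\sqrt{\Abs{w}_1\cdot k})$, which is too expensive. Decomposing the sampling into a quantum stage over $[n]$ for the $j$ coordinate and a classical stage for the $\ell$ coordinate, as above, is what avoids this blow-up. Once stage (i) is in place, stages (ii) and (iii) are immediate from the two preparatory results stated just above.
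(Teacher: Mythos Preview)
Your proposal is correct and follows essentially the same approach as the paper: invoke \cref{thm:prepare-many-copies-of-a-quantum-state} on the weighted distribution over $[n]$ to obtain $\widetilde{\Theta}(k)$ indices $j_i$, classically query $w_{j_i}$ and sample $\ell_i$ uniformly to get i.i.d.\ uniform pairs from $S_w$, then output the first $k$ distinct pairs using \cref{lemma:balls-into-bins} for the success probability and \cref{prop:sample-with-replace-for-without-replacement} for uniformity. Your explicit remark that the decomposition into a quantum step over $[n]$ and a classical step for $\ell$ is what keeps the cost at $\widetilde{O}(\sqrt{nk})$ rather than $\widetilde{O}(\sqrt{\Abs{w}_1 k})$ is exactly the point, and your inductive phrasing of the correctness argument (each new distinct element is uniform over the remainder) is in fact a slightly cleaner justification for the ``first $k$ distinct'' rule than a bare citation of \cref{prop:sample-with-replace-for-without-replacement}.
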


\begin{proof}
    In the following, we denote $k' = c k \log k$
    where $c$ is a sufficiently large constant.

    From \cref{thm:prepare-many-copies-of-a-quantum-state},
    we know with high probability, 
    the algorithm
    $\mathsf{MultiPrepare}(\cO_w, k')$ will
    return $k'$ copies of the state
    \[
        \ket{w} = \frac{1}{\sqrt{W}} \sum_{j=1}^n \sqrt{w_j}\ket{j},
    \]
    where $W = \sum_{j=1}^n w_j$,
    using $\widetilde{O}(\sqrt{nk})$ queries to $\cO_w$
    and in $\widetilde{O}(\sqrt{nk})$ time.
    We then measure the $k'$ states on the computational
    basis, obtaining $k'$ samples $j_1, j_2, \dots j_{k'}$.

    Then, for each sample $j_i$,
    we query the oracle $\cO_w$ with the state $\ket{j_i}\ket{0}$,
    and measure the state $\cO_w\ket{j_i}\ket{0}$
    in the computational basis to get the value of $w_{j_i}$. We then uniformly sample a integer $\ell_i\in [w_{j_i}]$, 
    producing a uniformly random 
    sample $(j_i, \ell_i)$ from $S_w$.
    For each sample $j_i$, 
    these operations take $\widetilde{O}(1)$ time.
    After the above procedure, we get a sequence $(j_1, \ell_1), (j_2, \ell_2), \ldots, (j_{k'}, \ell_{k'})$.

    We then output the first $k$ distinct elements 
    of the above sequence
    (if there are no such $k$ elements, the algorithm simply outputs fail; by \cref{lemma:balls-into-bins}, the probability of this situation occurring is small.).
    By \cref{prop:sample-with-replace-for-without-replacement},
    these elements constitute a set of size $k$ which
    is uniformly randomly drawn from $S_w$.
    It is clear that the above algorithm uses $\widetilde{O}(\sqrt{nk})$ queries to $\cO_w$ in total and
    runs in $\widetilde{O}(\sqrt{nk})$ time.
\end{proof}

As a direct result of \cref{thm:quantum-multiset-sample},
given adjacency list access to a graph and query access
to its approximate resistance,
we can sample $k$ different edges in
the isotropic-transformed graph in $\widetilde{O}(\sqrt{mk})$
time, 
yielding an efficient quantum implementation of the up-step.

\begin{theorem}[Quantum  Multi-Sampling in the Isotropic-Transformed Graph]\label{thm:quantum-algo-for-isotropic-sampling}
  There exists a quantum algorithm
  $\QIsoSample(\cO_G, \cT, \cR, k)$,
  that, 
  given query access $\cO_G$ to a graph $G = (V, E, W)$ 
  (where
  $\abss{V} = n$ and $\abss{E} = m$), 
  access $\cT$ to a labeled tree $T^\prime$ of the multigraph $G^\prime=\parens{V,E^\prime, w^\prime}$, access $\cR$ to an instance of $\QResistance$
  that allows quantum query to the approximate effective resistance
  $\widetilde{R}_e$ for edge $e$, and integer $k \le m$, with high probability,
  outputs a uniformly random $k$-size subset $S$ of $E^\prime\setminus T^\prime$,
  where
  $E^\prime=\set{(e,j) \mid e \in E, j \in \sqb{q_e} }$, 
  with
  $q_e=\ceil{\frac{mw_e\widetilde{R}_e}{n}}$.
  The algorithm uses $\widetilde{O}(\sqrt{mk})$ queries to $\cO_G$ and $\cR$, and runs in
  $\widetilde{O}(\sqrt{mk})$ time.
\end{theorem}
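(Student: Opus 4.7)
My plan is to reduce the task to the uniform $k$-subset sampling primitive of \cref{thm:quantum-multiset-sample}, applied to a suitably modified multiplicity vector. First, I will define $\widetilde{q} \in \mathbb{N}^E$ by subtracting from $q_e = \ceil{m w_e \widetilde{R}_e / n}$ the number $s_e$ of copies of $e$ appearing in $T'$. Because any two distinct copies of the same edge in $G'$ share both endpoints and would form a $2$-cycle forbidden in the spanning tree $T'$, one has $s_e \in \sets{0,1}$; letting $j^*_e$ denote the tree copy's label when $s_e = 1$, the map sending $(e, \ell) \in S_{\widetilde{q}} := \sets{(e, \ell) : e \in E,\, \ell \in \sqb{\widetilde{q}_e}}$ to $(e, \ell)$ if $\ell < j^*_e$ (or $s_e = 0$) and to $(e, \ell + 1)$ otherwise is a bijection onto $E' \setminus T'$ that preserves uniform distributions.

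With this in hand, I will implement an oracle $\cO_{\widetilde{q}}$: one query costs $O(1)$ queries to $\cO_G$ (for $w_e$), $\cR$ (for $\widetilde{R}_e$, hence $q_e$), and $\cT$ (for $s_e$ and $j^*_e$). I then verify the hypotheses of \cref{thm:quantum-multiset-sample} for $\widetilde{q}$: the dimension is $m$, and using the constant-factor approximation quality of $\widetilde{R}_e$ together with Foster's theorem (\cref{le:Foster}), the same computation as in the proof of \cref{le:isotropic-transform-on-graph} yields
\[
\norm{\widetilde{q}}_1 \le \sum_{e \in E} q_e \le m + \frac{m \sum_{e \in E} w_e \widetilde{R}_e}{n} = O(m),
\]
which is polynomial in $m$. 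Invoking \cref{thm:quantum-multiset-sample} on $(\cO_{\widetilde{q}}, k)$ then returns, with high probability, a uniformly random $k$-size subset $\widetilde{S}$ of $S_{\widetilde{q}}$ in $\widetilde{O}(\sqrt{mk})$ queries and $\widetilde{O}(\sqrt{mk})$ time; since each simulated oracle call costs $O(1)$ underlying queries, the overall query and time budgets remain $\widetilde{O}(\sqrt{mk})$.

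The last step is to apply the bijection above element-wise to $\widetilde{S}$, which uses $O(k)$ further calls to $\cT$ and returns the desired subset $S \subseteq E' \setminus T'$. I anticipate the only real delicacy to be operational rather than mathematical: confirming that $\cT$, as produced by $\QStoreTree$, supports in $O(1)$ time the lookup ``does $e$ have a copy in $T'$, and if so, with what label?''. Granting this QRAM-level interface, the remainder is a direct combination of \cref{thm:quantum-multiset-sample} with the combinatorial bijection, and no further probabilistic or quantum-algorithmic argument is required.
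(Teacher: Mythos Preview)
Your proposal is correct and follows the same route as the paper: build, from $\cO_G$, $\cR$, and $\cT$, a coherent oracle for the relevant multiplicity vector indexed by $E$, and then invoke \cref{thm:quantum-multiset-sample}. The one point where you go beyond the paper is the handling of tree edges. The paper's proof simply sets the multiplicity of $e$ to $0$ whenever some copy of $e$ lies in $T'$, while you subtract $s_e\in\{0,1\}$ and supply an explicit relabeling bijection $S_{\widetilde q}\to E'\setminus T'$. Your version is the more careful of the two---the paper's shortcut formally drops the remaining $q_e-1$ copies of each tree edge from the sample space, whereas your bijection keeps them---but conceptually the two arguments coincide and the cost analysis is identical.
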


\begin{proof}
    By definition, we have
    \[
    \cO_G \ket{e}\ket{0}
     = \ket{e} \ket{w_e},
    \text{  and  }
    \cR\ket{e} \ket{0}
    = \ket{e} \ket{\widetilde{R}_e}.
    \]
    Thus,
    together with
    access $\cT$ to a labeled tree $T'$,
    we could implement a unitary $U_q$
    satisfying
    \[
        U_q\ket{e}\ket{0} = \ket{e}\ket{q_e}
    \]
    with 
    $q_e = \ceil{\frac{mw_e \widetilde{R}_e}{n}}$
    if $e\in E' \setminus E_T'$, and $q_e = 0$ if $e\in E_T'$,
    using $\widetilde{O}(1)$ queries and in $\widetilde{O}(1)$ time.
    This is because we can coherently 
    evaluate $q_e$ given $\cO_G$, $\cR$ and $\cT$.

    Then, using \cref{thm:quantum-multiset-sample}, 
    with high probability,
    the algorithm $\mathsf{MultiSample}(U_q, k)$ will return a set
    $S$ of size $k$ which is uniformly drawn from 
    $E^\prime=\set{(e,j) \mid e \in E,  j \in \sqb{q_e} }$,
    using $\widetilde{O}(\sqrt{mk})$ queries to $\cO_G$,
    and in $\widetilde{O}(\sqrt{mk})$ time.
\end{proof}

\subsection{Proof of Main Theorem}
We can now prove our main theorem.
\begin{proof}
Without loss of generality, we assume $ n/m =o \parens{1}$.
At the initial step, 
the algorithm executes the subroutine $\QResistance$
to obtain query access to $\frac{1}{10}$-overestimates effective resistances of $G$
in $\widetilde{O}\parens{\sqrt{mn}}$ time (by \cref{prop:quantum-effective-resistance-oracle}). 
It then applies $\QMaxProductTree$
(see \cref{thm:min-spanning-tree-finding}) to find a spanning tree $T_{0}$ as the start in $\widetilde{O}\parens{\sqrt{mn}}$ time. 
Each edge $e \in T_0$ is then assigned a label $e^{(1)}$, and the labeled tree is stored in $\widetilde{O}(n)$ time
in the procedure $\Label$ and $\QStoreTree$.

Subsequently, in each iteration $t \in \sqb{M}$, the algorithm first uses the
sampling subroutine $\QIsoSample$ to sample a uniformly random set
$S^\prime_{t}$ of size $k$ from $E^\prime\setminus T^\prime_{t-1}$ in
$\widetilde O \parens{\sqrt{mk}}$ time, where $k=\Theta(n)$.
Next, using~\cref{thm:classical-random-tree-sampling}, it samples a labeled
random spanning tree in the subgraph $H ^\prime_t$ in $\widetilde{O}\parens{n}$
time, since the edge-set of subgraph $T ^\prime _{t-1}\bigcup S^\prime _{t}$ has
size at most $n-1+k= O \parens{n}$.
Overall, the algorithm's time complexity is $\widetilde{O}\parens{\sqrt{mn}}$.
It remains to prove the correctness of algorithm.

We first observe that $\widetilde{R}_e$ is a $\frac{1}{10}$-overestimate of $R_e$, i.e., $R_e \leq \widetilde{R}_e\leq \frac{11}{10} R_e $ for all $e \in E$. Thus, $\widetilde{\ell}_e = w_e \widetilde{R}_e$ provides an $\frac{11}{10} n$-leverage score overestimates, as shown below:
\begin{equation*}
	\sum_{e \in E} w_e \widetilde R_e \leq \sum_{e \in E} \frac{11}{10} w_e R_e \leq \frac{11}{10} n,
\end{equation*}
where the final inequality follows from Foster’s theorem (\cref{le:Foster}).

If the sampling error for the random spanning tree in each iteration (line 8 of~\cref{alg:qust}) is zero, 
then each iteration can be viewed as one step of the chain $M_{\mu^{\prime}}^k$, 
where $\mu^{\prime}: \binom{\sqb{m^{\prime}}}{k} \to \RR_{\geq 0}$ 
corresponds exactly to the random spanning tree distribution $\cW_{G^{\prime}}$. 
Here, $m^{\prime}$ represents the number of edges 
of the isotropic multigraph $G^{\prime}$ satisfying $m^\prime\leq 2m$.
By \cref{prop:isotropic-graph-mixing-time}, the (ideal) up-down walk has mixing time $O(\log^3(n) \log(1/\varepsilon))$ (\cref{eq:MT}). 
We can hence pick $M \in O(\log^3(n) \log(1/\varepsilon))$ so as to ensure that the (ideal) up-down walk returns a tree $T'_M$ that is distributed $\varepsilon/2$-close to $\cW_{G'}$.
Since $G^{\prime}$ is obtained by equally dividing the weight of each edge in~$G$ among the multiedges, mapping the multigraph $G^{\prime}$ back to $G$ ensures that distribution of the unlabeled  tree $T_M$ remains $\varepsilon/2$-close to $\cW_G$.

Taking into account the sampling error, note that we chose the error of $\RST\parens{H^{\prime}_t, \varepsilon/\parens{2M}}$ to be $\varepsilon/(2M)$, ensuring that in each iteration, the TV-distance between the actual and ideal case is at most $\varepsilon/(2M)$. After $M$ iterations, the cumulative TV-distance is bounded by $\varepsilon/2$.
Since the ideal case returns $T_M$ that is $\frac{\varepsilon}{2}$-close to $\cW_G$, actual distribution of $T_M$ will be $\varepsilon$-close to $\cW_G$.
\end{proof}

\section{Lower Bound}

Here we prove our lower bound, showing that the algorithm's time and query complexity are optimal, up to polylogarithmic factors.

\lowerbound*
\begin{proof}
The argument is similar to the $\Omega(\sqrt{mn})$ lower bound from~\cite{DHHM06} on the quantum query complexity of finding a minimum spanning tree, and follows from a lower bound on quantum search.
Let $m = k(n+1)$ for some integer $k$.
Consider a binary matrix $M \in {\{0,1\}}^{n \times k}$, with the promise that every row of $M$ contains a single 1-entry.
The task of finding all $n$ 1-entries corresponds to solving $n$ parallel $\mathrm{OR}_k$ instances.
By the direct product theorem on the quantum query complexity of the $\mathrm{OR}_k$-function~\cite{klauck2007quantum}, the quantum query complexity of this problem is $\Theta(n \sqrt{k}) = \Theta(\sqrt{mn})$.

Now we construct a weighted graph $G_M$ from $M$, in such a way that solving the aforementioned search problem on $M$ reduces to sampling a random spanning tree from $G_M$.
The vertices of $G_M$ are $\{s,\ell_1,\dots,\ell_k,r_1,\dots,r_n\}$.
Its edges are all $(s,\ell_i)$ with edge weight 1, and all $(\ell_i,r_j)$ with edge weight $M_{ij}$.
Notably, there are $n+k+1$ vertices and $n+k$ weight-1 edges.
The union of these edges forms the unique spanning tree $T$ with nonzero weight $\Pi_{e \in T} w_e = 1$.
As a consequence, sampling a random spanning tree in $G_M$ always returns $T$.
Since $T$ identifies all weight-1 edges in $G_M$ (and hence all 1-entries in $M$), the quantum query complexity is $\Omega(\sqrt{mn})$\footnote{A more refined argument shows that the lower bound even holds when all the edge weights are nonzero.
The argument works by replacing $w_{ij} \in \{0,1\}$ by $w_{ij} \in \{1/n^4,1\}$, and noting that a random spanning tree still returns $T$ with constant probability.}.
\end{proof}

\section*{Acknowledgement}\label{sec:ack}

We would like to thank Yang P. Liu for explaining the details of their work.
Minbo Gao and Chenghua Liu would like to thank
Sebastian Zur
for helpful discussions on quantum random walks and quantum sampling algorithms. 
Minbo Gao would like to thank
Jingqin Yang for discussions
of link-cut trees used in previous spanning tree sampling algorithms.
The work is supported by National Key Research and Development Program of China
(Grant No.\ 2023YFA1009403), National Natural Science Foundation of China (Grant
No.\ 12347104), and Beijing Natural Science Foundation (Grant No.\ Z220002).
SA was supported in part by the European QuantERA project QOPT (ERA-NET Cofund 2022-25), the French PEPR integrated projects EPiQ (ANR-22-PETQ-0007) and HQI (ANR-22-PNCQ-0002), and the French ANR project QUOPS (ANR-22-CE47-0003-01).

\bibliographystyle{alpha}
\bibliography{ref}

\newcommand{\etalchar}[1]{$^{#1}$}
\begin{thebibliography}{CBGV{\etalchar{+}}13}

\bibitem[AAKV01]{aharonov2001quantum}
Dorit Aharonov, Andris Ambainis, Julia Kempe, and Umesh Vazirani.
\newblock Quantum walks on graphs.
\newblock In {\em Proceedings of the thirty-third annual ACM symposium on
  Theory of computing}, pages 50--59, 2001.

\bibitem[AASV21]{alimohammadi2021fractionally}
Yeganeh Alimohammadi, Nima Anari, Kirankumar Shiragur, and Thuy-Duong Vuong.
\newblock Fractionally log-concave and sector-stable polynomials: counting
  planar matchings and more.
\newblock In {\em Proceedings of the 53rd Annual ACM SIGACT Symposium on Theory
  of Computing}, pages 433--446, 2021.

\bibitem[AD20]{anari2020isotropy}
Nima Anari and Micha{\l} Derezi{\'n}ski.
\newblock Isotropy and log-concave polynomials: Accelerated sampling and
  high-precision counting of matroid bases.
\newblock In {\em 2020 IEEE 61st Annual Symposium on Foundations of Computer
  Science (FOCS)}, pages 1331--1344. IEEE, 2020.

\bibitem[ADVY22]{anari2021domain}
Nima Anari, Micha{\l} Derezi\'{n}ski, Thuy-Duong Vuong, and Elizabeth Yang.
\newblock {Domain Sparsification of Discrete Distributions Using Entropic
  Independence}.
\newblock In Mark Braverman, editor, {\em 13th Innovations in Theoretical
  Computer Science Conference (ITCS 2022)}, volume 215 of {\em Leibniz
  International Proceedings in Informatics (LIPIcs)}, pages 5:1--5:23,
  Dagstuhl, Germany, 2022. Schloss Dagstuhl -- Leibniz-Zentrum f{\"u}r
  Informatik.

\bibitem[AdW22]{AdW22}
Simon Apers and Ronald de~Wolf.
\newblock Quantum speedup for graph sparsification, cut approximation, and
  laplacian solving.
\newblock {\em SIAM Journal on Computing}, 51(6):1703--1742, 2022.

\bibitem[AGM{\etalchar{+}}10]{asadpour2010log}
Arash Asadpour, Michel~X. Goemans, Aleksander Mądry, Shayan~Oveis Gharan, and
  Amin Saberi.
\newblock An ${O}(\log n \log\log n)$-approximation algorithm for the
  asymmetric traveling salesman problem.
\newblock In {\em Proceedings of the 2010 Annual ACM-SIAM Symposium on Discrete
  Algorithms}, pages 379--389, 2010.

\bibitem[AL20]{alev2020improved}
Vedat~Levi Alev and Lap~Chi Lau.
\newblock Improved analysis of higher order random walks and applications.
\newblock In {\em Proceedings of the 52nd Annual ACM SIGACT Symposium on Theory
  of Computing}, pages 1198--1211, 2020.

\bibitem[Ald90]{aldous1990random}
David~J Aldous.
\newblock The random walk construction of uniform spanning trees and uniform
  labelled trees.
\newblock {\em SIAM Journal on Discrete Mathematics}, 3(4):450--465, 1990.

\bibitem[ALG{\etalchar{+}}21]{anari2021log}
Nima Anari, Kuikui Liu, Shayan~Oveis Gharan, Cynthia Vinzant, and Thuy-Duong
  Vuong.
\newblock Log-concave polynomials iv: approximate exchange, tight mixing times,
  and near-optimal sampling of forests.
\newblock In {\em Proceedings of the 53rd Annual ACM SIGACT Symposium on Theory
  of Computing}, pages 408--420, 2021.

\bibitem[ALG22]{abdolazimi2022matrix}
Dorna Abdolazimi, Kuikui Liu, and Shayan~Oveis Gharan.
\newblock A matrix trickle-down theorem on simplicial complexes and
  applications to sampling colorings.
\newblock In {\em 2021 IEEE 62nd Annual Symposium on Foundations of Computer
  Science (FOCS)}, pages 161--172. IEEE, 2022.

\bibitem[ALG24]{anari2021spectral}
Nima Anari, Kuikui Liu, and Shayan~Oveis Gharan.
\newblock Spectral independence in high-dimensional expanders and applications
  to the hardcore model.
\newblock {\em SIAM Journal on Computing}, 53(6):FOCS20--1--FOCS20--37, 2024.

\bibitem[ALV22]{ALV22}
Nima Anari, Yang~P. Liu, and Thuy-Duong Vuong.
\newblock Optimal sublinear sampling of spanning trees and determinantal point
  processes via average-case entropic independence.
\newblock In {\em 2022 IEEE 63rd Annual Symposium on Foundations of Computer
  Science (FOCS)}, pages 123--134, 2022.

\bibitem[BBL09]{borcea2009negative}
Julius Borcea, Petter Br{\"a}nd{\'e}n, and Thomas Liggett.
\newblock Negative dependence and the geometry of polynomials.
\newblock {\em Journal of the American Mathematical Society}, 22(2):521--567,
  2009.

\bibitem[BCC{\etalchar{+}}22]{blanca2022mixing}
Antonio Blanca, Pietro Caputo, Zongchen Chen, Daniel Parisi, Daniel
  Štefankovič, and Eric Vigoda.
\newblock On mixing of markov chains: Coupling, spectral independence, and
  entropy factorization.
\newblock In {\em Proceedings of the 2022 Annual ACM-SIAM Symposium on Discrete
  Algorithms (SODA)}, pages 3670--3692, 2022.

\bibitem[Bro89]{broder1989generating}
Andrei~Z Broder.
\newblock Generating random spanning trees.
\newblock In {\em FOCS}, volume~89, pages 442--447, 1989.

\bibitem[BT06]{bobkov2006modified}
Sergey~G Bobkov and Prasad Tetali.
\newblock Modified logarithmic {Sobolev} inequalities in discrete settings.
\newblock {\em Journal of Theoretical Probability}, 19:289--336, 2006.

\bibitem[CBGV{\etalchar{+}}13]{cesa2013random}
Nicolo Cesa-Bianchi, Claudio Gentile, Fabio Vitale, Giovanni Zappella, et~al.
\newblock Random spanning trees and the prediction of weighted graphs.
\newblock {\em Journal of Machine Learning Research}, 14(1):1251--1284, 2013.

\bibitem[CGM21]{CGM21}
Mary Cryan, Heng Guo, and Giorgos Mousa.
\newblock Modified log-{S}obolev inequalities for strongly log-concave
  distributions.
\newblock {\em The Annals of Probability}, 49(1):506--525, 2021.

\bibitem[CG{\v{S}}V21]{chen2021rapid}
Zongchen Chen, Andreas Galanis, Daniel {\v{S}}tefankovi{\v{c}}, and Eric
  Vigoda.
\newblock Rapid mixing for colorings via spectral independence.
\newblock In {\em Proceedings of the 2021 ACM-SIAM Symposium on Discrete
  Algorithms (SODA)}, pages 1548--1557. SIAM, 2021.

\bibitem[CMN96]{colbourn1996two}
Charles~J Colbourn, Wendy~J Myrvold, and Eugene Neufeld.
\newblock Two algorithms for unranking arborescences.
\newblock {\em Journal of Algorithms}, 20(2):268--281, 1996.

\bibitem[DBPM20]{derezinski2020debiasing}
Michal Derezinski, Burak Bartan, Mert Pilanci, and Michael~W Mahoney.
\newblock Debiasing distributed second order optimization with surrogate
  sketching and scaled regularization.
\newblock {\em Advances in Neural Information Processing Systems},
  33:6684--6695, 2020.

\bibitem[DCMW19]{derezinski2019minimax}
Micha{\l} Derezi{\'n}ski, Kenneth~L Clarkson, Michael~W Mahoney, and Manfred~K
  Warmuth.
\newblock Minimax experimental design: Bridging the gap between statistical and
  worst-case approaches to least squares regression.
\newblock In {\em Conference on Learning Theory}, pages 1050--1069. PMLR, 2019.

\bibitem[DHHM06]{DHHM06}
Christoph D\"{u}rr, Mark Heiligman, Peter H{\o}yer, and Mehdi Mhalla.
\newblock Quantum query complexity of some graph problems.
\newblock {\em SIAM Journal on Computing}, 35(6):1310--1328, 2006.

\bibitem[DKM20]{derezinski2020improved}
Michal Derezinski, Rajiv Khanna, and Michael~W Mahoney.
\newblock Improved guarantees and a multiple-descent curve for column subset
  selection and the nystrom method.
\newblock {\em Advances in Neural Information Processing Systems},
  33:4953--4964, 2020.

\bibitem[DKP{\etalchar{+}}17]{durfee2017sampling}
David Durfee, Rasmus Kyng, John Peebles, Anup~B Rao, and Sushant Sachdeva.
\newblock Sampling random spanning trees faster than matrix multiplication.
\newblock In {\em Proceedings of the 49th Annual ACM SIGACT Symposium on Theory
  of Computing}, pages 730--742, 2017.

\bibitem[DPPR17]{durfee2017determinant}
David Durfee, John Peebles, Richard Peng, and Anup~B Rao.
\newblock Determinant-preserving sparsification of {SDDM} matrices with
  applications to counting and sampling spanning trees.
\newblock In {\em 2017 IEEE 58th Annual Symposium on Foundations of Computer
  Science (FOCS)}, pages 926--937. IEEE, 2017.

\bibitem[DW17]{derezinski2017unbiased}
Michal Derezinski and Manfred~KK Warmuth.
\newblock Unbiased estimates for linear regression via volume sampling.
\newblock {\em Advances in Neural Information Processing Systems}, 30, 2017.

\bibitem[DY24]{derezinski2024solving}
Micha{\l} Derezi{\'n}ski and Jiaming Yang.
\newblock Solving dense linear systems faster than via preconditioning.
\newblock In {\em Proceedings of the 56th Annual ACM Symposium on Theory of
  Computing}, pages 1118--1129, 2024.

\bibitem[DYMZ23]{duan2023low}
Leo~L Duan, Zeyu Yuwen, George Michailidis, and Zhengwu Zhang.
\newblock Low tree-rank bayesian vector autoregression models.
\newblock {\em Journal of Machine Learning Research}, 24(286):1--35, 2023.

\bibitem[Fos49]{foster1949average}
Ronald~M Foster.
\newblock The average impedance of an electrical network.
\newblock {\em Contributions to Applied Mechanics (Reissner Anniversary
  Volume)}, 333, 1949.

\bibitem[Gro96]{grover1996fast}
Lov~K Grover.
\newblock A fast quantum mechanical algorithm for database search.
\newblock In {\em Proceedings of the twenty-eighth annual ACM symposium on
  Theory of computing}, pages 212--219, 1996.

\bibitem[GRV09]{goyal2009expanders}
Navin Goyal, Luis Rademacher, and Santosh Vempala.
\newblock Expanders via random spanning trees.
\newblock In {\em Proceedings of the twentieth annual ACM-SIAM symposium on
  Discrete algorithms}, pages 576--585. SIAM, 2009.

\bibitem[GSS11]{gharan2011randomized}
Shayan~Oveis Gharan, Amin Saberi, and Mohit Singh.
\newblock A randomized rounding approach to the traveling salesman problem.
\newblock In {\em 2011 IEEE 52nd Annual Symposium on Foundations of Computer
  Science}, pages 550--559. IEEE, 2011.

\bibitem[Gue83]{guenoche1983random}
Alain Guenoche.
\newblock Random spanning tree.
\newblock {\em Journal of Algorithms}, 4(3):214--220, 1983.

\bibitem[Ham22]{Ham22}
Yassine Hamoudi.
\newblock Preparing many copies of a quantum state in the black-box model.
\newblock {\em Physical Review A}, 105:062440, Jun 2022.

\bibitem[HPVP21]{herbster2021gang}
Mark Herbster, Stephen Pasteris, Fabio Vitale, and Massimiliano Pontil.
\newblock A gang of adversarial bandits.
\newblock {\em Advances in Neural Information Processing Systems},
  34:2265--2279, 2021.

\bibitem[HX16]{harvey2016generating}
Nicholas~JA Harvey and Keyulu Xu.
\newblock Generating random spanning trees via fast matrix multiplication.
\newblock In {\em LATIN 2016: Theoretical Informatics: 12th Latin American
  Symposium, Ensenada, Mexico, April 11-15, 2016, Proceedings 12}, pages
  522--535. Springer, 2016.

\bibitem[JPV21]{jain2021spectral}
Vishesh Jain, Huy~Tuan Pham, and Thuy~Duong Vuong.
\newblock Spectral independence, coupling with the stationary distribution, and
  the spectral gap of the glauber dynamics.
\newblock {\em arXiv preprint arXiv:2105.01201}, 2021.

\bibitem[Kir47]{kirchhoff1847ueber}
Gustav Kirchhoff.
\newblock Ueber die aufl{\"o}sung der gleichungen, auf welche man bei der
  untersuchung der linearen vertheilung galvanischer str{\"o}me gef{\"u}hrt
  wird.
\newblock {\em Annalen der Physik}, 148(12):497--508, 1847.

\bibitem[KM09]{kelner2009faster}
Jonathan~A Kelner and Aleksander Madry.
\newblock Faster generation of random spanning trees.
\newblock In {\em 2009 50th Annual IEEE Symposium on Foundations of Computer
  Science}, pages 13--21. IEEE, 2009.

\bibitem[K{\v{S}}DW07]{klauck2007quantum}
Hartmut Klauck, Robert {\v{S}}palek, and Ronald De~Wolf.
\newblock Quantum and classical strong direct product theorems and optimal
  time-space tradeoffs.
\newblock {\em SIAM Journal on Computing}, 36(5):1472--1493, 2007.

\bibitem[KT11]{kulesza2011k}
Alex Kulesza and Ben Taskar.
\newblock k-dpps: Fixed-size determinantal point processes.
\newblock In {\em Proceedings of the 28th International Conference on Machine
  Learning (ICML-11)}, pages 1193--1200, 2011.

\bibitem[KT{\etalchar{+}}12]{kulesza2012determinantal}
Alex Kulesza, Ben Taskar, et~al.
\newblock Determinantal point processes for machine learning.
\newblock {\em Foundations and Trends{\textregistered} in Machine Learning},
  5(2--3):123--286, 2012.

\bibitem[Kul90]{kulkarni1990generating}
Vidyadhar~G. Kulkarni.
\newblock Generating random combinatorial objects.
\newblock {\em Journal of Algorithms}, 11(2):185--207, 1990.

\bibitem[Liu21]{liu2021coupling}
Kuikui Liu.
\newblock {From Coupling to Spectral Independence and Blackbox Comparison with
  the Down-Up Walk}.
\newblock In Mary Wootters and Laura Sanit\`{a}, editors, {\em Approximation,
  Randomization, and Combinatorial Optimization. Algorithms and Techniques
  (APPROX/RANDOM 2021)}, volume 207 of {\em Leibniz International Proceedings
  in Informatics (LIPIcs)}, pages 32:1--32:21, Dagstuhl, Germany, 2021. Schloss
  Dagstuhl -- Leibniz-Zentrum f{\"u}r Informatik.

\bibitem[LP17a]{levin2017markov}
David~A Levin and Yuval Peres.
\newblock {\em Markov chains and mixing times}, volume 107.
\newblock American Mathematical Soc., 2017.

\bibitem[LP17b]{lyons2017probability}
Russell Lyons and Yuval Peres.
\newblock {\em Probability on trees and networks}, volume~42.
\newblock Cambridge University Press, 2017.

\bibitem[LV24]{lee2024eldan}
Yin~Tat Lee and Santosh~S Vempala.
\newblock Eldan's stochastic localization and the {KLS} conjecture:
  Isoperimetry, concentration and mixing.
\newblock {\em Annals of Mathematics}, 199(3):1043--1092, 2024.

\bibitem[MR02]{moore2002quantum}
Cristopher Moore and Alexander Russell.
\newblock Quantum walks on the hypercube.
\newblock In {\em Randomization and Approximation Techniques in Computer
  Science: 6th International Workshop, RANDOM 2002 Cambridge, MA, USA,
  September 13--15, 2002 Proceedings 5}, pages 164--178. Springer, 2002.

\bibitem[MST14]{madry2014fast}
Aleksander Madry, Damian Straszak, and Jakub Tarnawski.
\newblock Fast generation of random spanning trees and the effective resistance
  metric.
\newblock In {\em Proceedings of the twenty-sixth annual ACM-SIAM symposium on
  Discrete algorithms}, pages 2019--2036. SIAM, 2014.

\bibitem[NST22]{nikolov2022proportional}
Aleksandar Nikolov, Mohit Singh, and Uthaipon Tantipongpipat.
\newblock Proportional volume sampling and approximation algorithms for
  a-optimal design.
\newblock {\em Mathematics of Operations Research}, 47(2):847--877, 2022.

\bibitem[RBM23]{richter2023improved}
Robin Richter, Shankar Bhamidi, and Sach Mukherjee.
\newblock Improved baselines for causal structure learning on interventional
  data.
\newblock {\em Statistics and Computing}, 33(5):93, 2023.

\bibitem[Ric07]{richter2007quantum}
Peter~C Richter.
\newblock Quantum speedup of classical mixing processes.
\newblock {\em Physical Review A—Atomic, Molecular, and Optical Physics},
  76(4):042306, 2007.

\bibitem[RTF18]{russo2018linking}
Lu{\'\i}s~MS Russo, Andreia~Sofia Teixeira, and Alexandre~P Francisco.
\newblock Linking and cutting spanning trees.
\newblock {\em Algorithms}, 11(4):53, 2018.

\bibitem[Rud99]{rudelson1999random}
Mark Rudelson.
\newblock Random vectors in the isotropic position.
\newblock {\em Journal of Functional Analysis}, 164(1):60--72, 1999.

\bibitem[Sch18]{schild2018almost}
Aaron Schild.
\newblock An almost-linear time algorithm for uniform random spanning tree
  generation.
\newblock In {\em Proceedings of the 50th Annual ACM SIGACT Symposium on Theory
  of Computing}, pages 214--227, 2018.

\bibitem[SS11]{SS11}
Daniel~A. Spielman and Nikhil Srivastava.
\newblock Graph sparsification by effective resistances.
\newblock {\em SIAM Journal on Computing}, 40(6):1913--1926, 2011.

\bibitem[SSK13]{sharpnack2013detecting}
James Sharpnack, Aarti Singh, and Akshay Krishnamurthy.
\newblock Detecting activations over graphs using spanning tree wavelet bases.
\newblock In {\em Artificial intelligence and statistics}, pages 536--544.
  PMLR, 2013.

\bibitem[TAL19]{teixeira2019bayesian}
Leonardo~V Teixeira, Renato~M Assun{\c{c}}{\~a}o, and Rosangela~H Loschi.
\newblock Bayesian space-time partitioning by sampling and pruning spanning
  trees.
\newblock {\em Journal of Machine Learning Research}, 20(85):1--35, 2019.

\bibitem[Wil96]{wilson1996generating}
David~Bruce Wilson.
\newblock Generating random spanning trees more quickly than the cover time.
\newblock In {\em Proceedings of the twenty-eighth annual ACM symposium on
  Theory of computing}, pages 296--303, 1996.

\end{thebibliography}

\appendix

\section{Useful Theorems}

In this part, we revisit some useful
quantum algorithms.

We first recall the following theorems 
about quantum speedups for graph sparsifications
and approximating the effective resistances.
While we state these theorems for a general choice of $\varepsilon$, we will use these routines only with $\varepsilon = \frac{1}{10}$ in our algorithm.

\begin{definition}[Graph Spectral Sparsifier]\label{def:graph-spectral-sparsifier}
	Let $ G=\parens{V,E,w}$ be a weighted graph, and let $\tilde G =\parens{V,\tilde E, \tilde w}$ be a reweighted
	subgraph of $G$, where $\tilde w:\tilde E\rightarrow \RR_{\geq 0}$ and
	$\tilde E= \sets{e\in E: \tilde w_e>0} \subseteq E$.
	For any $\varepsilon >0$, $\tilde G $ is an $\varepsilon$-spectral sparsifier
	of $G$ if for any vector $x\in \RR^V$, the following holds:
	\begin{equation*}
		\abss{x ^\top  L_ {\tilde G}x  -x ^\top L_ { G}  x} \leq
		\varepsilon \cdot x ^\top L_ { G}  x.
	\end{equation*}
\end{definition}

\begin{theorem}[Quantum Speedups for Graph Sparsification, {\cite[Theorem 1.1]{AdW22}}]%
\label{thm:apers-spectral-sparse-appendix}
  There exists a quantum algorithm $\mathsf{GraphSparsify}(\cO _G, \varepsilon)$
  that, given query access $\cO _G$ to a weighted graph $G=\paren{V,E,w}$ and a parameter $\varepsilon\geq \sqrt{n/m}$,
  with high probability, outputs the explicit description of an
  $\varepsilon$-spectral sparsifier of $G$ with
  $\tilde O\parens {n/\varepsilon^2}$ edges. The algorithm uses
  $\tilde{O} (\sqrt{mn}/\varepsilon)$ queries to $\cO_{G}$, and runs in time
  $\tilde O\parens{ \sqrt{mn}/\varepsilon}$.
\end{theorem}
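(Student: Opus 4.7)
The plan is effectively to re-derive \cite[Theorem 1.1]{AdW22}, since the statement is imported verbatim from there. My strategy would be to quantize the classical Spielman-Srivastava sparsification scheme, and the work would split into a classical sampling analysis, a ``bootstrapping'' construction that avoids computing effective resistances from scratch, and quantum subroutines for the two bottleneck steps (approximate-resistance queries and biased edge sampling).

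First I would recall the classical blueprint: independently retain each edge $e$ with probability $p_e$ proportional to its leverage score $\ell_e = w_e R_e$, rescaling its weight to $w_e/p_e$. A matrix-Chernoff (Tropp) bound shows that $\widetilde O(n/\varepsilon^2)$ retained edges suffice for an $\varepsilon$-spectral sparsifier, and by Foster's theorem (\cref{le:Foster}) the total leverage mass $\sum_e \ell_e \le n-1$, so the target size is consistent with the budget. The substance of the proof is therefore to produce good leverage-score overestimates quickly.

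Next I would use an iterative bootstrapping construction: start from a crude sparsifier $H_0$ (a spanning forest plus a modest uniform sample of edges, built in $\widetilde O(\sqrt{mn})$ quantum time via the MST routine of \cite{DHHM06}); given a constant-factor sparsifier $H_i$, use the Spielman-Srivastava Johnson-Lindenstrauss trick, namely precompute $O(\log n)$ Laplacian solves on $L_{H_i}$, to answer each query $\widetilde R_e$ as an $O(\log n)$-dimensional inner product in $\widetilde O(1)$ time. Feed these overestimates into another sampling round to produce a sharper $H_{i+1}$; after $O(\log n)$ doublings the accuracy reaches $\varepsilon$. The quantum speedup concentrates at the sampling step: drawing $\widetilde O(n/\varepsilon^2)$ edges with prescribed probabilities from $m$ candidates is a multi-state-preparation / amplitude-amplification task in the spirit of Hamoudi's routine, costing $\widetilde O(\sqrt{mn}/\varepsilon)$ instead of the classical $\widetilde O(m)$, which matches the claimed complexity.

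The hard part will be the error accounting. One has to show that feeding merely constant-factor overestimates $\widetilde R_e \asymp R_e$ into the matrix-Chernoff argument still produces a genuine $(1+\varepsilon)$-spectral sparsifier at the final round, propagate failure probabilities across the $O(\log n)$ doublings by a union bound, and verify that the hypothesis $\varepsilon \ge \sqrt{n/m}$ is exactly the regime in which $\widetilde O(\sqrt{mn}/\varepsilon)$ remains below the trivial $\widetilde O(m)$ bound achievable by reading the whole input. Each of these steps is worked out carefully in \cite{AdW22}, so in this paper I would simply invoke their result as a black box rather than reprove it.
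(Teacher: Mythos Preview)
The paper does not prove this theorem at all: it is stated in the appendix as a citation of \cite[Theorem~1.1]{AdW22} and used as a black box. Your final sentence already acknowledges this, and that is the correct disposition; the preceding sketch of the \cite{AdW22} argument is a reasonable high-level summary but is neither needed nor present in the paper.
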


 \begin{theorem}[Approximating Effective Resistances, {\cite[Theorem 2]{SS11}}]\label{thm:approximate-resistance-compute}
     There exists a classical algorithm
     $\mathsf{ApproxResistance}(G, \varepsilon)$
      that, on input graph $G= \parens{V, E, w}$ and accuracy parameter $\varepsilon > 0$,
     with high probability,
     outputs a matrix $Z_G$ of size $p \times n $ with $p= \ceils{24\log n /\varepsilon^2}$ satisfying
     \begin{equation*}
 		\parens{1-\varepsilon}R _{ab  }\leq \norm{ Z_G \parens{\delta_a -\delta_b }}^2 \leq
 		\parens{1+\varepsilon}R_ {ab }
     \end{equation*}
     for every pair of  $a, b\in V$,
     where $R_{ab}$ is the effective resistance between $a$ and $b$.
     The algorithm runs
     in $\widetilde O\parens{m /\varepsilon^2}$ time.
 \end{theorem}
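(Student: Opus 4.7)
The plan is to follow the Spielman--Srivastava strategy: combine the algebraic identity $R_{ab}=\norm{W^{1/2}BL_G^{+}(\delta_a-\delta_b)}^2$ with a Johnson--Lindenstrauss dimension reduction, and then compute the resulting low-dimensional matrix by solving a small number of Laplacian systems.

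First I would re-express the effective resistance in a form amenable to sketching. Let $B\in\RR^{E\times V}$ be the signed edge--vertex incidence matrix (orient each edge arbitrarily), and let $W=\diag(w_e)$. Then $L_G=B^\top W B$, so
\begin{equation*}
R_{ab}=(\delta_a-\delta_b)^\top L_G^{+}(\delta_a-\delta_b)=(\delta_a-\delta_b)^\top L_G^{+}L_GL_G^{+}(\delta_a-\delta_b)=\norm{W^{1/2}BL_G^{+}(\delta_a-\delta_b)}^2.
\end{equation*}
Hence, writing $Y:=W^{1/2}BL_G^{+}\in\RR^{E\times V}$, we have $R_{ab}=\norm{Y(\delta_a-\delta_b)}^2$ for every pair $a,b\in V$.

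Second, I would apply the Johnson--Lindenstrauss lemma with a random $\pm 1/\sqrt{p}$ (or Gaussian) matrix $Q\in\RR^{p\times E}$ and define $Z_G:=QY=QW^{1/2}BL_G^{+}$. Standard JL gives: for any fixed vector $y\in\RR^E$,
\begin{equation*}
\Pr\!\sbra{\,\abss{\norm{Qy}^2-\norm{y}^2}>\varepsilon\norm{y}^2\,}\leq 2\exp\paren{-c\varepsilon^2 p}
\end{equation*}
for an absolute constant $c$. Applying this to the $\binom{n}{2}$ vectors $Y(\delta_a-\delta_b)$ and taking a union bound, the choice $p=\ceil{24\log n/\varepsilon^2}$ is enough to ensure, with high probability, that $(1-\varepsilon)R_{ab}\leq \norm{Z_G(\delta_a-\delta_b)}^2\leq (1+\varepsilon)R_{ab}$ for all pairs simultaneously. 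This delivers exactly the guarantee stated in the theorem.

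Third, I would handle the computational cost, which is where the real work lies. Forming $Q$ and $QW^{1/2}B$ explicitly takes $O(pm)=\widetilde O(m/\varepsilon^2)$ time, since $B$ has $2m$ nonzeros. The remaining step is to compute $Z_G=(QW^{1/2}B)L_G^{+}$, i.e., to right-multiply by $L_G^{+}$. This is equivalent to solving $p$ Laplacian linear systems $L_G x_i=r_i$, one for each row $r_i$ of $QW^{1/2}B$. Using a near-linear time Laplacian solver in the style of Spielman--Teng (or subsequent improvements), each system is solved to sufficient accuracy in $\widetilde O(m)$ time, giving total cost $\widetilde O(pm)=\widetilde O(m/\varepsilon^2)$. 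The main obstacle is bookkeeping the solver error so that it does not degrade the JL guarantee: one must argue that, by solving each Laplacian system to accuracy polynomially small in $n,1/\varepsilon$, the extra multiplicative distortion incurred is absorbed into the $(1\pm\varepsilon)$ bound (up to a harmless rescaling of $\varepsilon$). With that accounting in place, concatenating the two error sources and the union bound yields the stated guarantee within the claimed $\widetilde O(m/\varepsilon^2)$ running time.
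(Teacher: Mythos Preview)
Your sketch is correct and faithfully reproduces the Spielman--Srivastava argument: the identity $R_{ab}=\norm{W^{1/2}BL_G^{+}(\delta_a-\delta_b)}^2$, a Johnson--Lindenstrauss projection down to $p=\ceil{24\log n/\varepsilon^2}$ rows, and $p$ near-linear-time Laplacian solves, with the solver error absorbed into the $(1\pm\varepsilon)$ factor. Note, however, that the paper does not supply its own proof of this statement at all --- it is quoted as a black box from~\cite{SS11} in the appendix of auxiliary results --- so there is no ``paper's proof'' to compare against beyond the original reference, which your proposal matches.
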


 By first applying the quantum graph sparsification algorithm,
 and then computing the approximate effective resistance on
 the sparisified graph,
 we can obtain a quantum speedup for
 approximating effective resistance overestimates.
 This is characterized by the following theorem.

\begin{proposition}[Quantum Algorithms for the Effective Resistance Overestimates, Adapted from {\cite[Claim 7.9]{AdW22}}]\label{prop:quantum-effective-resistance-oracle}
  Assume quantum query acces $\cO_G$ to $G= \parens{V, E, w}$.
  There is a quantum data structure
  $\QResistance$, that
  takes in the accuracy parameter $\varepsilon \in \interval[open]{0}{1/3} $,
  and supports the following operations:
  \begin{itemize}
    \item Initialization $\mathsf{QResistanceInit}(G, \varepsilon)$:
    outputs an instance $\mathcal{R}$
    of $\QResistance$.
    This operation uses $\tilde O\parens{\sqrt{mn}/\varepsilon}$ queries to
    $\cO_G$, and runs in
    $\tilde O\parens{\sqrt{mn}/\varepsilon+ n /\varepsilon^4}$ time.
    \item Query $\mathcal{R}.\mathsf{Query}$: outputs a unitary satisfying
          \[
          \mathcal{R}.\mathsf{Query} \ket{a} \ket{b} \ket{0} = \ket{a}\ket{b}\ket{\tilde R _{ab}}
          \]
          with
          $ R _{ab }\leq \tilde R _{ab}\leq (1+\varepsilon) R _{ab} $
          for every pair of vertices $a,b \in V$, where $R _{ab}$ is the
          effective resistance between vertices $a$ and $b$ in graph $G$.
          The operation runs in
          $\widetilde O(1/\varepsilon^2)$ time.
  \end{itemize}
\end{proposition}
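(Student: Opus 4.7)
The plan is to combine the quantum graph sparsification routine of \cref{thm:apers-spectral-sparse-appendix} with the classical effective-resistance computation of \cref{thm:approximate-resistance-compute}, following the general recipe hinted at in the text before the statement. The key observation is that effective resistances are preserved, up to multiplicative error, under spectral sparsification: if $\tilde{G}$ is an $\eta$-spectral sparsifier of $G$, then from $(1-\eta) L_G \preceq L_{\tilde{G}} \preceq (1+\eta) L_G$ one gets $(1+\eta)^{-1} L_G^+ \preceq L_{\tilde{G}}^+ \preceq (1-\eta)^{-1} L_G^+$, and hence $R_{\tilde{G},ab} \in \sqb{(1+\eta)^{-1}, (1-\eta)^{-1}} R_{G,ab}$ for every pair $a,b$.

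The initialization $\mathsf{QResistanceInit}(G,\varepsilon)$ would proceed in two stages. First, I would invoke $\mathsf{GraphSparsify}(\cO_G, \eta)$ with $\eta = \Theta(\varepsilon)$ to obtain, with high probability, an explicit $\eta$-spectral sparsifier $\tilde{G}$ with $\widetilde{O}(n/\varepsilon^2)$ edges, at a cost of $\widetilde{O}(\sqrt{mn}/\varepsilon)$ queries and time. Second, I would run the classical $\mathsf{ApproxResistance}(\tilde{G}, \eta)$ on the sparsifier, which takes $\widetilde{O}(\abss{\tilde{E}}/\varepsilon^2) = \widetilde{O}(n/\varepsilon^4)$ time and produces a matrix $Z \in \RR^{p \times n}$ with $p = \widetilde{O}(1/\varepsilon^2)$ such that $\norm{Z(\delta_a - \delta_b)}^2$ is a $(1 \pm \eta)$-approximation of $R_{\tilde{G}, ab}$. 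The total initialization cost is $\widetilde{O}(\sqrt{mn}/\varepsilon + n/\varepsilon^4)$, matching the claim. The matrix $Z$ is then stored in QRAM (taking $\widetilde{O}(n/\varepsilon^2)$ bits).

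For the query unitary $\mathcal{R}.\mathsf{Query}$, on input $\ket{a}\ket{b}\ket{0}$ I would coherently fetch all $p$ pairs $(Z_{ia}, Z_{ib})$ for $i \in [p]$ from the QRAM, compute $\sum_{i=1}^p (Z_{ia} - Z_{ib})^2$ into a scratch register, and finally rescale by a fixed factor $1/(1-\eta)^2$ (or similar) to turn the two-sided approximation into a one-sided overestimate of the form $R_{ab} \leq \tilde{R}_{ab} \leq (1+\varepsilon) R_{ab}$. Uncomputing the scratch registers leaves the desired unitary transformation. Since $p = \widetilde{O}(1/\varepsilon^2)$, each query costs $\widetilde{O}(1/\varepsilon^2)$ time as required.

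The main technical point is bookkeeping for the approximation parameters: I need to chain the $\eta$-error from sparsification with the $\eta$-error from $\mathsf{ApproxResistance}$ and verify that, after a deterministic rescaling, the resulting $\tilde{R}_{ab}$ is a genuine overestimate within factor $(1+\varepsilon)$. A small constant-factor tightening of $\eta$ relative to $\varepsilon$ (e.g.\ $\eta = \varepsilon/6$) suffices, since $(1+\eta)/(1-\eta)^2 \leq 1 + O(\eta)$ for small $\eta$. The only subtlety worth flagging is that $\mathsf{GraphSparsify}$ requires $\eta \geq \sqrt{n/m}$; this is harmless for our application because we only need $\varepsilon = 1/10$, and in the general statement one may assume the regime where this holds, or otherwise fall back to storing the graph directly.
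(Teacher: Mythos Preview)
Your proposal is correct and follows essentially the same approach: the paper's proof is even terser, simply invoking \cite[Claim 7.9]{AdW22} as a black box yielding two-sided estimates $(1-\varepsilon')R_{ab}\le R'_{ab}\le(1+\varepsilon')R_{ab}$ and then setting $\varepsilon'=\varepsilon/3$ and $\tilde R_{ab}=R'_{ab}/(1-\varepsilon')$. You have unpacked the construction behind that claim (sparsify via \cref{thm:apers-spectral-sparse-appendix}, run \cref{thm:approximate-resistance-compute} on the sparsifier, store $Z$ in QRAM), which is exactly the recipe the paper sketches in the paragraph preceding the proposition.
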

\begin{proof}
	Originally the quantum algorithm in~\cite[Claim 7.9]{AdW22}
    provides query access to 
    the estimate $R_{ab}'$ of the effective resistance
    satisfying
    $(1-\varepsilon')R_{ab} \le R_{ab}' \le (1+\varepsilon') R_{ab}$.
    Therefore, 
    for our purpose,
    it suffices to take
    $\varepsilon' = \varepsilon/3$
    and 
    $\widetilde{R}_{ab} = R_{ab}'/(1-\varepsilon')$
    in their algorithm.
\end{proof}

The following theorem allows us 
to obtain a quantum speedup
for the task of preparing multiple copies of a state.

\begin{theorem}[Preparing many copies of a quantum state, {\cite[Theorem 1]{Ham22}}]\label{thm:prepare-many-copies-of-a-quantum-state}
  There exists a quantum algorithm
  $\mathsf{MultiPrepare}(\cO_w, k)$
  that, given oracle access $\cO_{w}$ to a
  vector $w \in \RR _{\geq 0}^n$ (0-indexed), and $ k\in [n]$,
  with high probability,
  outputs $k$ copies of the state $\ket{w}$, where
  $\ket{w}=\frac{1}{\sqrt{W}}\sum_{i\in \sqb{n}_0}\sqrt{w _i } \ket{i}$ with
  $W={\sum_{i\in \sqb{n}_0}w _i}$.
  The algorithm uses $\widetilde{O} (\sqrt{nk})$ queries to $\cO_{w}$, and runs
  in $\widetilde{O} (\sqrt{nk})$ time.
\end{theorem}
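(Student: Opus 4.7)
The plan is to prove this by combining quantum amplitude amplification for state preparation with a case split that mirrors the $\Theta(\sqrt{nk})$ query complexity of multi-element quantum search. The guiding intuition is that the same amortization which makes finding $k$ marked elements in a list of size $n$ cost $O(\sqrt{nk})$ queries (rather than $k\sqrt{n}$) ought to apply to preparing $k$ copies of $\ket{w}$, since copying a known state is free and amplitude amplification becomes cheap when $w$ is not too concentrated.

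My first step would be a single-copy subroutine. Starting from $\frac{1}{\sqrt{n}} \sum_i \ket{i}\ket{0}$, I would query $\cO_w$ to obtain $\frac{1}{\sqrt{n}} \sum_i \ket{i}\ket{w_i}$ and then rotate an ancilla by an angle proportional to $\sqrt{w_i/w_{\max}}$, where a constant-factor estimate of $w_{\max} = \max_i w_i$ is obtained by a preliminary quantum maximum-finding in $\widetilde{O}(\sqrt{n})$ queries. The resulting \emph{good} branch carries amplitude $\sqrt{W/(n\, w_{\max})}$, so standard Grover-style amplitude amplification prepares one copy of $\ket{w}$ in $O(\sqrt{n\, w_{\max}/W})$ queries. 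A parallel quantum-counting call estimates $W$ up to a constant factor in $\widetilde{O}(\sqrt{n})$ queries, which fits inside the target budget $\widetilde{O}(\sqrt{nk})$.

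Next I would split into two regimes based on these estimates. In the \emph{bulk regime}, where the per-copy cost $O(\sqrt{n\, w_{\max}/W})$ is already bounded by $\widetilde{O}(\sqrt{n/k})$, I would simply run $k$ independent invocations of the amplification subroutine, yielding $k$ copies for a total cost of $\widetilde{O}(\sqrt{nk})$. In the \emph{concentrated regime}, where $w_{\max}/W$ is so large that per-copy amplification is too expensive, the effective support of $w$ must be small; here I would instead switch to quantum search for multiple marked elements to explicitly recover the large coordinates of $w$ in $\widetilde{O}(\sqrt{nk})$ queries, and then prepare all $k$ copies coherently from the resulting classical description with no further queries, using a QROM-style data-lookup preparation.

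The main obstacle is the boundary between these two regimes, especially when the values $w_i$ span many scales simultaneously so that neither strategy alone attains the $\sqrt{nk}$ bound. The natural remedy is a dyadic decomposition of $[n]$ into level sets $L_j = \{i : w_i \in [2^{j-1}, 2^j)\}$, together with \emph{variable-time amplitude amplification}, which smoothly interpolates between ``amplify each copy'' on buckets where many copies are needed and ``harvest the entire bucket'' when its support is thin. Showing that the total query cost summed over all $O(\log n)$ buckets remains $\widetilde{O}(\sqrt{nk})$---via a convex bound of the form $\sum_j \sqrt{|L_j| \cdot k_j} \leq \sqrt{n \sum_j k_j}$ with $\sum_j k_j = O(k)$---is the technically delicate part, and is precisely the content of the argument in~\cite{Ham22}.
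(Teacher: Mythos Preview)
The paper does not prove this theorem at all: it is stated in the appendix as a cited external result from~\cite{Ham22} (specifically, Theorem~1 there) and used as a black box in the construction of $\mathsf{MultiSample}$. There is therefore no ``paper's own proof'' to compare your proposal against.

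That said, your sketch is a plausible outline of the ideas behind Hamoudi's result, and you correctly identify at the end that the delicate part is exactly what~\cite{Ham22} establishes. If your goal is to make the present paper self-contained, the two-regime split plus dyadic bucketing you describe is in the right spirit; however, Hamoudi's actual argument is organized somewhat differently (around a careful amortized analysis of repeated amplitude amplification with adaptive thresholds rather than an explicit bulk/concentrated dichotomy), so you should consult~\cite{Ham22} directly rather than rely on this reconstruction. For the purposes of this paper, simply citing the result---as the authors do---is the appropriate choice.
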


We further recall the quantum algorithm for 
finding a minimum spanning tree proposed in~\cite{DHHM06}.
\begin{theorem}[Quantum algorithm for finding a minimum spanning tree, Theorem 4.2 in~\cite{DHHM06}]\label{thm:min-spanning-tree-finding}
  There exists a quantum algorithm $\mathsf{QMinTree}(\cO_{G})$ that, given
  query access $\cO_{G}$ to an undirected weighted graph $G = (V, E, w)$ with
  $\abss{V} = n$ and $\abss{E}$ = m, with high probability, outputs a minimum
  spanning tree of $G$.
  The algorithm uses $O(\sqrt{mn})$ queries to $\cO_{G}$ and runs in
  $\widetilde{O} (\sqrt{mn})$ time.
\end{theorem}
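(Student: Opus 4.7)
The plan is to adapt Borůvka's classical algorithm to the quantum setting using the Dürr–Høyer quantum minimum finding subroutine. Recall that Borůvka's algorithm proceeds in rounds: every connected component of the current spanning forest identifies its minimum-weight outgoing edge, all such edges are simultaneously added to the forest, and the components are merged. Since each round at least halves the number of components, the algorithm terminates after $O(\log n)$ rounds, and the correctness follows from the standard ``safe edge'' invariant (every selected edge is the minimum across a cut that respects the current forest, and hence lies in some MST).

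Concretely, I would maintain a union-find data structure tracking the current components together with a QRAM storing the edges of the partial spanning forest and the component label of every vertex. In round $i$, let $C^{(i)}_1, \ldots, C^{(i)}_{n_i}$ denote the current components, where $n_i \le n/2^{i-1}$. For each component $C^{(i)}_j$, I would run quantum minimum finding over the list of edges of $G$ incident to $C^{(i)}_j$, obtained by coherently querying $\cO_G$ on each vertex $v \in C^{(i)}_j$ and filtering (also coherently) those edges whose other endpoint lies in a different component. By the Dürr–Høyer minimum finding result, locating the minimum-weight outgoing edge of $C^{(i)}_j$ uses $\widetilde{O}(\sqrt{d^{(i)}_j})$ queries to $\cO_G$, where $d^{(i)}_j$ denotes the total degree of $C^{(i)}_j$ in~$G$.

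The query complexity analysis is the heart of the argument. In round $i$, the total number of queries is $\sum_j \widetilde{O}(\sqrt{d^{(i)}_j})$. Since every edge of $G$ contributes to at most two components, $\sum_j d^{(i)}_j \le 2m$, and Cauchy--Schwarz gives
\[
\sum_{j=1}^{n_i} \sqrt{d^{(i)}_j} \le \sqrt{n_i \cdot \sum_{j=1}^{n_i} d^{(i)}_j} \le \sqrt{2 m n_i} \le \sqrt{2mn}\cdot 2^{-(i-1)/2}.
\]
Summing over $i = 1, \ldots, O(\log n)$, the geometric series yields a total of $\widetilde{O}(\sqrt{mn})$ quantum queries, as desired.

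The main obstacle will be matching this query bound with a comparable \emph{time} bound, since naively the bookkeeping needed to implement the coherent ``incident-to-this-component'' oracle could be costly. The key is to use the quantum-read/classical-write QRAM assumed in the paper's computational model: after each round, the current component labels are written classically into QRAM in $O(n)$ time (via a union-find pass), and from then on component membership of any vertex can be looked up coherently in polylogarithmic time inside the Dürr--Høyer circuit. With this data structure in place, each quantum query to $\cO_G$ costs $\widetilde{O}(1)$ of additional classical/quantum work, so the overall time complexity matches the query complexity up to polylogarithmic factors, yielding the claimed $\widetilde{O}(\sqrt{mn})$ runtime.
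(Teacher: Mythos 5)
The paper does not prove this statement itself; it cites it directly from Dürr, Heiligman, Høyer, and Mhalla~\cite{DHHM06}, so there is no ``paper's own proof'' to compare against. Your reconstruction does, however, correctly capture the structure of the argument in~\cite{DHHM06}: Bor\r{u}vka rounds, per-component Dürr--Høyer minimum finding over the union of incident adjacency lists, Cauchy--Schwarz to bound each round by $O(\sqrt{m\,n_i})$, and the geometric decay of $\sqrt{n_i}$ across rounds. The key technical details you hand-wave (classically maintaining prefix sums of component degrees in QRAM so that a single integer index coherently addresses an ``edge incident to component $j$,'' and relabeling components after each merge pass) are exactly what make the time complexity track the query complexity, so your last paragraph is pointing in the right direction.

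One subtlety you skip over, which is where~\cite{DHHM06} has to work: if each of the up to $n_i$ Dürr--Høyer calls in a round is independently boosted to success probability $1-1/\mathrm{poly}(n)$, you pay an extra $\Theta(\log n)$ factor per call, and the query bound degrades to $\widetilde{O}(\sqrt{mn})$ rather than the stated $O(\sqrt{mn})$. Dürr et al.\ avoid this by running each minimum-finding call with only constant success probability, treating a failed call as ``that component simply does not merge this round,'' and then arguing that the number of components still shrinks by a constant factor in expectation each round. Also, one must verify that a candidate returned by Dürr--Høyer actually is a minimum outgoing edge before adding it (so that failures do not silently corrupt the tree), and one needs a tie-breaking rule on weights (e.g.\ lexicographic by edge index) so the chosen edges cannot form a cycle. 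These are fixable details rather than conceptual gaps, but they are needed to get the theorem exactly as stated.
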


In our paper, we need to find a spanning tree
with maximal weight product 
in an undirected weighted graph.
This could
be done by directly applying above algorithm
to the same graph with 
slight modifications of weight.
We give more details in the following.

\begin{corollary}[Quantum algorithm for finding a maximum spanning tree, Adapted from Theorem 4.2 in~\cite{DHHM06}]\label{cor:max-weight-spanning-tree-finding}
  There exists a quantum algorithm $\QMaxProductTree(\cO_{G})$ that, given adjacency
  query access $\cO_{G}$ to an undirected weighted graph $G = (V, E, w)$ with
  $\abss{V} = n$ and $\abss{E}$ = m, and weight $w$ bounded
  by some constant $W$, 
  with high probability, outputs a maximum weight-product
  spanning tree of $G$.
  The algorithm uses $O(\sqrt{mn})$ queries to $\cO_{G}$ and runs in
  $\widetilde{O} (\sqrt{mn})$ time.
\end{corollary}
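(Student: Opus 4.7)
The plan is a direct reduction: finding a spanning tree $T$ that maximizes the weight-product $\prod_{e\in T} w_e$ is equivalent, after taking logarithms, to finding a spanning tree that minimizes the weight-sum $\sum_{e\in T} \rbra{-\log w_e}$. So the idea is to define a new weight function $w'_e \defeq -\log w_e$ and invoke the quantum minimum spanning tree algorithm $\mathsf{QMinTree}$ from \cref{thm:min-spanning-tree-finding} on the graph $G' = (V, E, w')$.

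More concretely, I would proceed in three steps. First, I would construct a derived oracle $\cO_{G'}$ from $\cO_G$ that, on input $\ket{e}\ket{0}$, first queries $\cO_G$ to obtain $\ket{e}\ket{w_e}$ and then coherently computes $-\log w_e$ into an ancilla register (uncomputing the intermediate $w_e$), yielding $\ket{e}\ket{w'_e}$. Since the weights lie in the bounded range $(0, W]$, the transformed weights lie in $[-\log W, +\infty)$ and can be represented with $\widetilde{O}(1)$ bits of precision sufficient to preserve the ordering between any two edge weights. Each call to $\cO_{G'}$ costs $O(1)$ queries to $\cO_G$ and $\widetilde{O}(1)$ additional time. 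Second, I would apply $\mathsf{QMinTree}(\cO_{G'})$, which by \cref{thm:min-spanning-tree-finding} returns with high probability a spanning tree $T^*$ minimizing $\sum_{e\in T^*} w'_e = -\sum_{e\in T^*} \log w_e$, which is exactly a maximum weight-product spanning tree of $G$. Third, the complexity bound $O(\sqrt{mn})$ queries and $\widetilde{O}(\sqrt{mn})$ time follows immediately, since the derived oracle introduces only constant overhead per query.

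The one subtlety worth flagging is the handling of zero-weight edges: $-\log 0 = +\infty$ is not representable. The simplest workaround is to replace any edge weight $w_e = 0$ with a sufficiently small sentinel value (say $2^{-L}$ for $L$ polynomially larger than all other relevant bit-lengths), ensuring such edges are selected by the minimum spanning tree algorithm only when no alternative exists---in which case the maximum weight-product is $0$ anyway, and any spanning tree is an optimal output. Alternatively, one may restrict attention to the subgraph of positive-weight edges and only fall back to zero-weight edges if connectivity fails; either variant preserves the query and time complexity.

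In summary, the proof is a black-box reduction and no obstacle beyond the sentinel-value bookkeeping for zero weights arises. The correctness reduces to the monotonicity of $\log$ on $(0, \infty)$, and the complexity reduces to that of $\mathsf{QMinTree}$, yielding the claimed $O(\sqrt{mn})$ queries and $\widetilde{O}(\sqrt{mn})$ time.
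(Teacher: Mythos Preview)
Your proposal is correct and takes essentially the same approach as the paper: both reduce the maximum weight-product spanning tree to a minimum spanning tree instance via the logarithm (the paper uses $w'_e = \log(W/w_e)$, you use $w'_e = -\log w_e$, which differ only by the additive constant $\log W$ and hence yield the same minimizer), and then invoke $\mathsf{QMinTree}$ from \cref{thm:min-spanning-tree-finding}. Your discussion of zero-weight edges is an extra bit of care the paper omits, but the core argument is identical.
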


\begin{proof}
    Note that a maximum weight-product spanning tree of 
    $G = (V, E, w)$ corresponds to a minimum spanning tree
    of the graph $G' = (V, E, w')$,
    with $w'_e = \log (W/w_e)$ for all $e\in E$.
    Given query access to $G$,
    a query access to $G'$ can be constructed in $\widetilde{O}(1)$ time.
    Therefore, by applying~\cref{thm:min-spanning-tree-finding},
    we know $\mathsf{QMinTree}(\cO_{G'})$ is what we need.
\end{proof}

}
\end{document}